
\documentclass{article}

\usepackage{times}
\usepackage{graphicx} 
\usepackage{subfigure} 

\usepackage{natbib}

\usepackage{algorithm}
\usepackage{algorithmic}

\usepackage{amsmath, amssymb, amsthm}

\usepackage{hyperref}



\usepackage[accepted]{icml2016}

\icmltitlerunning{On Greedy Column Subset Selection}

\newtheorem{defin}{Definition}
\newtheorem{thm}{Theorem}
\newtheorem{lemma}{Lemma}
\newtheorem{remark}{Remark}

\newcommand{\Real}{\mathbb{R}}
\newcommand{\E}{\mathbb{E}}
\newcommand{\argmin}{\text{arg min}}
\newcommand{\argmax}{\text{arg max}}

\newcommand{\Proj}{\text{proj}}
\newcommand{\Span}{\text{span}}
\newcommand{\Pow}{\mathcal{P}}
\newcommand{\eps}{\varepsilon}
\newcommand{\del}{\delta}
\newcommand{\re}{\mathbb{R}}
\newcommand{\norm}[1]{\lVert #1 \rVert}
\newcommand{\calN}{\mathcal{N}}
\newcommand{\plusminus}{\raisebox{.2ex}{$\scriptstyle\pm$}}

\newcommand{\enum}[1]{\begin{enumerate} #1 \end{enumerate}}
\newcommand{\greedy}{\textsc{Greedy}}

\newcommand{\ltlgreedy}{\textsc{Lazier-than-lazy Greedy}}
\newcommand{\distgreedy}{\textsc{Distgreedy}}
\newcommand{\iprod}[1]{\langle #1 \rangle}
\newcommand{\opt}{OPT}
\newcommand{\gcss}{\textsf{GCSS}}
\newcommand{\maxcoverage}{\textsf{MAX-COVERAGE}}

\setlength{\parskip}{4pt}

\begin{document} 

\twocolumn[
\icmltitle{Greedy Column Subset Selection: \\ 
           New Bounds and Distributed Algorithms}

\icmlauthor{Jason Altschuler}{jasonma@princeton.edu}
\icmladdress{Princeton University,
            Princeton, NJ 08544}
\icmlauthor{Aditya Bhaskara}{bhaskara@cs.utah.edu}
\icmladdress{School of Computing, 50 S. Central Campus Drive,
             Salt Lake City, UT 84112}
\icmlauthor{Gang (Thomas) Fu}{thomasfu@google.com}
\icmlauthor{Vahab Mirrokni}{mirrokni@google.com}
\icmlauthor{Afshin Rostamizadeh}{rostami@google.com}
\icmlauthor{Morteza Zadimoghaddam}{zadim@google.com}
\icmladdress{Google, 76 9th Avenue, New York, NY 10011}

\icmlkeywords{column selection, greedy algorithms, coresets}

\vskip 0.3in
]

\begin{abstract}
The problem of column subset selection has recently attracted a
large body of research, with feature selection serving as one obvious
and important application. Among the techniques that have been
applied to solve this problem, the greedy algorithm has been shown to
be quite effective in practice. However, theoretical guarantees on its performance have not been explored thoroughly, especially
in a distributed setting.  In this paper, we study the greedy
algorithm for the column subset selection problem from a theoretical and
empirical perspective and show its effectiveness in a distributed
setting. In particular, we provide an improved 
approximation guarantee for the greedy algorithm which we show is tight up to a constant factor, and
present the first distributed implementation with provable
approximation factors. We use the idea of randomized composable core-sets, developed recently in the context of submodular maximization. 
Finally, we validate the effectiveness of this distributed algorithm via an
empirical study.
\end{abstract} 

\section{Introduction}
Recent technological advances have made it possible to collect unprecedented amounts of data. However, extracting patterns of information from these high-dimensional massive datasets is often challenging. How do we automatically determine, among millions of measured features (variables), which are informative, and which are irrelevant or redundant? The ability to select such features from high-dimensional data is crucial for computers to recognize patterns in complex data in ways that are fast, accurate, and even human-understandable \cite{Guyon}.

An efficient method for feature selection receiving increasing attention is Column Subset Selection (CSS). CSS is a constrained low-rank-approximation problem that seeks to approximate a matrix (e.g. instances by features matrix) by projecting it onto a space spanned by only a few of its columns (features). Formally, given a matrix $A$ with $n$ columns, and a target rank $k < n$, we wish to find a size-$k$ subset $S$ of $A$'s columns such that each column $A_i$ of $A$ ($i \in \{1, \dots, n\}$) is contained as much as possible in the subspace $\Span(S)$, in terms of the Frobenius norm:
\begin{align*}
\argmax_{S \text{ contains $k$ of $A$'s columns}} \sum_{i=1}^n \|\Proj(A_i \; | \; \Span(S))\|_2^2
\end{align*}

While similar in spirit to general low-rank approximation, some advantages with CSS include flexibility,
interpretability and efficiency during inference. CSS is an
unsupervised method and does not require labeled data, which is
especially useful when labeled data is sparse.  We note, on the other
hand, unlabeled data is often very abundant and therefore scalable
methods, like the one we present, are often needed.  
Furthermore, by subselecting features, as opposed to generating new
features via an arbitrary function of the input features, we keep the
semantic interpretation of the features intact. This is especially
important in applications that require interpretable models.  A third
important advantage is the efficiency of applying the solution CSS
feature selection problem during inference. Compared to PCA or other
methods that require a matrix-matrix multiplication to project input
features into a reduced space during inference time, CSS only requires
selecting a subset of feature values from a new instance vector. This
is especially useful for latency sensitive applications and when the
projection matrix itself may be prohibitively large, for example in
restricted memory settings.

While there have been significant advances in CSS~\cite{Boutsidis1,Boutsidis2,Guruswami}, most of the algorithms are either impractical and not applicable in a distributed setting for large datasets, or they do not have good (multiplicative $1 - \eps$) provable error bounds. 
Among efficient algorithms  studied for the CSS problem is the simple {\em greedy algorithm}, which iteratively selects the best column and keeps it. Recent work shows that it does well in practice and even in a distributed setting~\cite{Farahat1, Farahat2} and admits a performance guarantee \cite{Civril1}. However, the known guarantees depend on an arbitrarily large matrix-coherence parameter, which is unsatisfactory. Also, even though the algorithm is relatively fast, additional optimizations are needed to scale it to datasets with millions of features and instances. 

\subsection{Our contributions}
Let $A \in \Real^{m \times n}$ be the given matrix, and let $k$ be the target number of columns. Let $\opt_k$ denote the {\em optimal} set of columns, i.e., one that {\em covers} the maximum Frobenius mass of $A$. Our contributions are as follows.

{\em Novel analysis of Greedy.} For any $\eps > 0$, we show that the natural greedy algorithm (Section~\ref{section-2}), after $r = \frac{k}{\sigma_{\min}(\opt_k) \eps}$ steps, gives an objective value that is within a $(1-\eps)$  factor of the optimum. We also give a matching lower bound, showing that $\frac{k}{\sigma_{\min}(\opt_k) \eps}$ is tight up to a constant factor. Here $\sigma_{\min}(\opt_k)$ is the smallest squared singular value of the {\em optimal} set of columns (after scaling to unit vectors).

Our result is similar in spirit to those of~\cite{Civril1, Liberty}, but with an important difference. Their bound on $r$ depends on the {\em least} $\sigma_{\min}(S)$ over \textit{all} $S$ of size $k$, while ours depends on $\sigma_{\min}(\opt_k)$. Note that these quantities can differ significantly. For instance, if the data has even a little bit of redundancy (e.g. few columns that are near duplicates), then there exist $S$ for which $\sigma_{\min}$ is tiny, but the optimal set of columns could be reasonably well-conditioned (in fact, we would {\em expect} the optimal set of columns to be fairly well conditioned).


{\em Distributed Greedy.} We consider a natural distributed implementation of the greedy algorithm (Section~\ref{section-2}). Here, we show that an interesting phenomenon occurs: even though partitioning the input does not work in general (as in coreset based algorithms), {\em randomly} partitioning works well. This is inspired by a similar result on submodular maximization~\cite{Mirrokni}. 
Further, our result implies a $2$-pass streaming algorithm for the CSS problem in the {\em random arrival} model for the columns.


We note that if the columns each have sparsity $\phi$,~\cite{Boutsidis2015} gives an algorithm with total communication of $O(\frac{sk\phi}{\eps} + \frac{sk^2}{\eps^4})$. Their algorithm works for ``worst case'' partitioning of the columns into machines and is much more intricate than the greedy algorithm. In constrast, our algorithm is very simple, and for a random partitioning, the communication is just the first term above, along with an extra $\sigma_{\min}(\opt)$ term. Thus depending on $\sigma_{\min}$ and $\eps$, each of the bounds could be better than the other.

{\em Further optimizations.} We also present techniques to speed up the implementation of the greedy algorithm. We show that the recent result of~\cite{Mirzasoleiman} (once again, on submodular optimization) can be extended to the case of CSS, improving the running time significantly.

We then compare our algorithms (in accuracy and running times) to various well-studied CSS algorithms. (Section 6.)

\subsection{Related Work}
The CSS problem is one of the central problems related to matrix approximation. Exact solution is known to be UG-hard~\cite{Civril2}, and several approximation methods have been proposed over the years. Techniques such as importance sampling \cite{Drineas1, Frieze}, adaptive sampling \cite{Deshpande1}, volume sampling \cite{Deshpande2, Deshpande4}, leverage scores \cite{Drineas-Leverage}, and projection-cost preserving sketches \cite{Cohen} have led to a much better understanding of the problem. \cite{Guruswami} gave the optimal dependence between column sampling and low-rank approximation.
Due to the numerous applications, much work has been done on the implementation side, where adaptive sampling and leverage scores have been shown to perform well. A related, extremely simple algorithm is the greedy algorithm, which turns out to perform well and be scalable \cite{Farahat1, Farahat2}. This was first analyzed by~\cite{Civril1}, as we discussed. 

There is also substantial literature about distributed algorithms for CSS \cite{Pi, Feldman, Cohen, Farahat3, Farahat4, Boutsidis2015}. In particular, \cite{Farahat3, Farahat4} present distributed versions of the greedy algorithm based on MapReduce. Although they do not provide theoretical guarantees, their experimental results are very promising.

The idea of composable coresets has been applied explicitly or implicitly to several problems~\cite{FeldmanSS13,BalcanEL13,VahabPODS2014}. Quite recently, for some problems in which coreset methods do not work in general, surprising results have shown that randomized variants of them give good approximations~\cite{BarbosaENW15,Mirrokni}. We extend this framework to the CSS problem.

\subsection{Background and Notation}
We use the following notation throughout the paper. The set of integers $\{1, \dots, n\}$ is denoted by $[n]$. For a matrix $A \in \Real^{m \times n}$, $A_j$ denotes the $j$th column ($A_j \in \Real^m$). Given $S \subseteq [n]$, $A[S]$ denotes the submatrix of $A$ containing columns indexed by $S$. The projection matrix $\Pi_A$ projects onto the column span of $A$. 
Let $\norm{A}_F$ denote the Frobenius norm, i.e., $\sqrt{\sum_{i,j} A_{i, j}^2}$. We write $\sigma_{\min}(A)$ to denote the minimum \textit{squared} singular value, i.e., $\inf_{x:\norm{x}_2 = 1} \frac{\|Ax\|_2^2}{\|x\|_2^2}$. We abuse notation slightly, and for a set of vectors $V$, we write $\sigma_{\min}(V)$ for the $\sigma_{\min}$ of the matrix with columns $V$.

\begin{defin}\label{defn:css-problem}
Given a matrix $A \in \Real^{m \times n}$ and an integer $k \le n$, the \textbf{Column Subset Selection (CSS) Problem} asks to find
\[ \argmax_{S \subseteq [n], |S| = k} \norm{\Pi_{A[S]}A}_F^2, \]
i.e., the set of columns that {\em best explain} the full matrix $A$.
\end{defin}

We note that it is also common to cast this as a minimization problem, with the objective being $\norm{A - \Pi_{A[S]} A}_F^2$. While the exact optimization problems are equivalent, obtaining multiplicative approximations for the minimization version could be harder when the matrix is low-rank.

For a set of vectors $V$ and a matrix $M$, we denote
\[ f_M(V)  = \norm{\Pi_V M}_F^2. \]
Also, the case when $M$ is a single vector will be important. For any vector $u$, and a set of vectors $V$, we write
\[  f_u(V) = \norm{\Pi_V u}_2^2. \]

\begin{remark} \label{rem:not-submodular} Note that $f_M (V)$ can be viewed as the extent to which we can {\em cover} matrix $M$ using vectors $V$.  However, unlike combinatorial covering objectives, our definition is not submodular, or even subadditive.
As an example, consider covering the following $A$ using its own columns. Here, $f_A(\{A_1, A_2\}) = \|A\|_F^2 > f(\{A_1\}) + f(\{A_2\})$.
\[ A = \left( \begin{array}{ccc}
1 & 0 & 1 \\
1 & -1 & 0 \\
0 & 1 & 1 \end{array} \right)\]
\end{remark}

\section{Greedy Algorithm for Column Selection} \label{section-2}

Let us state our algorithm and analysis in a slightly general form. Suppose we have two matrices $A, B$ with the same number of rows and $n_A$, $n_B$ columns respectively. The $\gcss(A, B, k)$ problem is that of finding a subset $S$ of columns of $B$, that maximizes $f_A(S)$ subject to $|S|=k$.
Clearly, if $B = A$, we recover the CSS problem stated earlier. Also, note that scaling the columns of $B$ will not affect the solution, so let us assume that the columns of $B$ are all unit vectors. The greedy procedure iteratively picks columns of $B$ as follows:

\begin{algorithm} \label{alg:greedy}
\caption{$\greedy$($A \! \in \! \Real^{m \times n_A}$, $B \! \in\!  \Real^{m \times n_B}$, $k \leq n_B$)}
\begin{algorithmic}[1]
\STATE $S \leftarrow \emptyset$
\FOR{$i = 1:k$}
\STATE Pick column $B_j$ that maximizes $f_A(S \cup B_j)$
\STATE $S \leftarrow S \cup \{B_j\}$
\ENDFOR
\STATE Return $S$
\end{algorithmic}
\end{algorithm}

Step (3) is the computationally intensive step in $\greedy$ -- we need to find the column that gives the most {\em marginal gain}, i.e., $f_A(S \cup B_j) - f_A(S)$.  
In Section~\ref{section-5}, we describe different techniques to speed up the calculation of marginal gain, while obtaining a $1-\eps$ approximation to the optimum $f(\cdot)$ value. Let us briefly mention them here. 

{\em Projection to reduce the number of rows.}  We can left-multiply both $A$ and $B$ with an $r \times n$ Gaussian random matrix.  For $r \ge \frac{k\log n}{\eps^2}$, this process is well-known to preserve $f_A(\cdot)$, for any $k$-subset of the columns of $B$ (see~\cite{Sarlos} or Appendix Section~\ref{app:random-projections} for details).

{\em Projection-cost preserving sketches.}
Using recent results from \cite{Cohen}, we can project each {\em row} of $A$ onto a random $O(\frac{k}{\eps^2})$ dimensional space, and then work with the resulting matrix. Thus we may assume that the number of columns in $A$ is $O(\frac{k}{\eps^2})$. This allows us to efficiently compute $f_A(\cdot)$.

{\em Lazier-than-lazy greedy.}
\cite{Mirzasoleiman} recently proposed the first algorithm that achieves a constant factor approximation for maximizing submodular functions with a {\em linear} number of marginal gain evaluations. We show that a similar analysis holds for $\gcss$, even though the cost function is not submodular.

We also use some simple yet useful ideas from \cite{Farahat2} to compute the marginal gains (see Section~\ref{section-5}).

\subsection{Distributed Implementation}
We also study a distributed version of the greedy algorithm, shown below (Algorithm~\ref{alg:cs-greedy}). $\ell$ is the number of machines.

\begin{algorithm} \label{alg:cs-greedy}
\caption{$\distgreedy$($A$, $B$, $k$, $\ell$)}
\begin{algorithmic}[1]
\STATE {\em Randomly} partition the columns of $B$ into $T_1, \dots, T_{\ell}$
\STATE (Parallel) compute $S_i \leftarrow \greedy(A, T_i, \frac{32k}{\sigma_{\min}(OPT)})$
\STATE (Single machine) aggregate the $S_i$, and compute $S \leftarrow \greedy(A, \cup_{i=1}^{\ell}S_i, \frac{12k}{\sigma_{\min}(OPT)})$
\STATE Return $\argmax_{S' \in \{S, S_1,\dots, S_{\ell}\}} f_A(S')$
\end{algorithmic}
\end{algorithm}

As mentioned in the introduction, the key here is that the partitioning is done {\em randomly}, in contrast to most results on {\em composable summaries}. 
We also note that machine $i$ only sees columns $T_i$ of $B$, but requires evaluating $f_A(\cdot)$ on the full matrix $A$ when running \greedy.\footnote{It is easy to construct examples in which splitting both $A$ and $B$ fails badly.} The way to implement this is again by using projection-cost preserving sketches. (In practice, keeping a small sample of the columns of $A$ works as well.) The sketch is first passed to all the machines, and they all use it to evaluate $f_A(\cdot)$.


We now turn to the analysis of the single-machine and the distributed versions of the greedy algorithm.

\section{Peformance analysis of GREEDY} \label{section-3}

The main result we prove is the following, which shows that by taking only slightly more than $k$ columns, we are within a $1 -\eps$ factor of the optimal solution of size $k$.

\begin{thm} \label{thm:greedy-main}
Let $A \in \Real^{m \times n_A}$ and $B \in \Real^{m \times n_B}$. Let $OPT_k$ be a set of columns from $B$ that maximizes $f_A(S)$ subject to $|S| = k$.  Let $\eps > 0$ be any constant, and let $T_r$ be the set of columns output by $\greedy(A, B, r)$, for $r = \frac{16k}{\eps \sigma_{\min}(OPT_k)}$. Then we have
$$f_A(T_r) \geq (1 - \eps) f_A(OPT_k).$$
\end{thm}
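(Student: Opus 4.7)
The plan is to track the \emph{residual gap} $g_i := \sqrt{f_A(\opt_k)} - \sqrt{f_A(S_i)}$ and show that its reciprocal grows linearly, giving $g_i \leq O(k/(i\,\sigma_{\min}(\opt_k)))$. The obvious submodular-style argument fails because $f_A$ is not submodular (Remark~\ref{rem:not-submodular}), and the natural ``push the bound through OPT'' approach yields $\sigma_{\min}$ of the OPT columns \emph{after projecting out} $\Span(S_i)$, which can be much smaller than $\sigma_{\min}(\opt_k)$. This is the main obstacle, and I circumvent it by applying the key matrix inequality to the \emph{residual matrix} $A_i := (I - \Pi_{S_i})A$ rather than to $A$ itself, so that the unprojected (and hence well-conditioned) OPT columns appear in the denominator.

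First I prove a per-step gain bound. Let $V_0$ be the matrix whose columns are the unit OPT vectors. I claim
\[ f_{A_i}(\opt_k) \;=\; \|\Pi_{V_0}A_i\|_F^2 \;\le\; \tfrac{1}{\sigma_{\min}(\opt_k)}\,\sum_{j=1}^{k}\|A_i^{T}v_j\|^2, \]
which follows by writing $\Pi_{V_0} = V_0(V_0^{T}V_0)^{-1}V_0^{T}$, using cyclicity of the trace, and applying the PSD inequality $\Tr(XY)\le\|X\|_{\text{op}}\Tr(Y)$ with $X = (V_0^{T}V_0)^{-1}$ and $Y = V_0^{T}A_iA_i^{T}V_0$. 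A short calculation gives $A_i^{T}v_j = A^{T}v_j^{\perp S_i}$, so the marginal gain of $v_j$ over $S_i$ equals $\|A_i^{T}v_j\|^2/\|v_j^{\perp S_i}\|^2 \ge \|A_i^{T}v_j\|^2$ (using $\|v_j\| = 1$). Averaging over $j$ and using that greedy picks the best column in all of $B$, the greedy gain at step $i$ is at least $\tfrac{\sigma_{\min}(\opt_k)}{k}\,f_{A_i}(\opt_k)$.

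Next I lower bound $f_{A_i}(\opt_k)$ by the squared gap $g_i^2$. By the Frobenius triangle inequality applied to $A = \Pi_{S_i}A + A_i$,
\[ \sqrt{f_A(\opt_k)} \;=\; \|\Pi_{V_0}A\|_F \;\le\; \|\Pi_{V_0}\Pi_{S_i}A\|_F + \|\Pi_{V_0}A_i\|_F \;\le\; \sqrt{f_A(S_i)} + \sqrt{f_{A_i}(\opt_k)}, \]
where the last inequality uses that projections are operator-norm contractions. Rearranging gives $\sqrt{f_{A_i}(\opt_k)} \ge g_i$ whenever $g_i > 0$ (and if $g_i \le 0$ we are already done).

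Finally, assuming $f_A(S_{i+1}) \le f_A(\opt_k)$ (otherwise the theorem is immediate), we have $\sqrt{f_A(S_{i+1})} + \sqrt{f_A(S_i)} \le 2\sqrt{f_A(\opt_k)}$. Combining the previous two steps with the factorization $f_A(S_{i+1})-f_A(S_i) = (\sqrt{f_A(S_{i+1})}+\sqrt{f_A(S_i)})(\sqrt{f_A(S_{i+1})}-\sqrt{f_A(S_i)})$ and normalizing by $\sqrt{f_A(\opt_k)}$, the scaled gap $h_i := g_i/\sqrt{f_A(\opt_k)} \in [0,1]$ satisfies $h_{i+1}\le h_i - \tfrac{\sigma_{\min}(\opt_k)}{2k}\,h_i^2$. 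The standard reciprocal trick then yields $1/h_{i+1}\ge 1/h_i + \sigma_{\min}(\opt_k)/(2k)$, so $h_r \le 2k/(r\,\sigma_{\min}(\opt_k))$; plugging in $r = 16k/(\eps\,\sigma_{\min}(\opt_k))$ gives $h_r\le\eps/8$, whence $f_A(S_r)\ge (1-\eps/8)^2\, f_A(\opt_k)\ge (1-\eps)\,f_A(\opt_k)$.
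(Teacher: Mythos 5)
Your argument is correct, and although it rests on the same geometric fact as the paper's proof --- that the conditioning of the \emph{unprojected} optimal columns controls how much of the residual they can still explain --- the packaging is genuinely different at both stages. For the per-step gain, the paper works one column $u$ of $A$ at a time (Lemma~\ref{lem:one-vector}, via Cauchy--Schwarz applied to the maximizing combination $x=\sum_i\alpha_i v_i$) and then aggregates over the columns of $A$ with Jensen's inequality to obtain Lemma~\ref{lem:large-gain}; you instead work with the whole residual matrix $A_i=(I-\Pi_{S_i})A$ and obtain the bound in one shot from $\Tr(XY)\le\norm{X}_{\mathrm{op}}\Tr(Y)$, which subsumes both the Cauchy--Schwarz step and the Jensen step. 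Your per-step bound $\frac{\sigma_{\min}(\opt_k)}{k}\bigl(\sqrt{f_A(\opt_k)}-\sqrt{f_A(S_i)}\bigr)^2$ is in fact at least as strong as the paper's $\sigma_{\min}(\opt_k)\frac{(f_A(\opt_k)-f_A(S_i))^2}{4k\,f_A(\opt_k)}$, because $\bigl(\sqrt{f_A(\opt_k)}+\sqrt{f_A(S_i)}\bigr)^2\le 4f_A(\opt_k)$. For the iteration count, the paper runs a gap-halving argument over $O(\log(1/\eps))$ epochs, whereas you stay on the square-root scale and use the reciprocal recursion $1/h_{i+1}\ge 1/h_i+\sigma_{\min}(\opt_k)/(2k)$, which is cleaner and even leaves constant-factor slack (you land at $(1-\eps/8)^2\ge 1-\eps/4$ rather than $1-\eps$). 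The main thing the paper's route buys that yours does not is reusability: the column-level inequality and the intermediate displays \eqref{eq:start}--\eqref{eq:temp6} are invoked again in the distributed analysis (Lemma~\ref{lem:opt-s}) and in Theorem~\ref{thm:core-set-2}, where a per-column decomposition with general competing sets is needed; your matrix-level shortcut would have to be re-derived in those settings. Two implicit assumptions in your write-up are harmless but worth stating: $\sigma_{\min}(\opt_k)>0$ (otherwise $r$ is infinite and the claim is vacuous, and $\Pi_{V_0}=V_0(V_0^TV_0)^{-1}V_0^T$ is well defined exactly in this case), and the degenerate case $v_j\in\Span(S_i)$, where the ratio $\norm{A_i^Tv_j}^2/\norm{v_j^{\perp S_i}}^2$ is $0/0$ but your inequality ``marginal gain $\ge\norm{A_i^Tv_j}^2$'' still holds as $0\ge 0$.
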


We show in Appendix Section \ref{app:tight-ex} that this bound is tight up to a constant factor, with respect to $\eps$ and $\sigma_{\min}(OPT_k)$. Also, we note that $\gcss$ is a harder problem than $\maxcoverage$, implying that if we can choose only $k$ columns, it is impossible to approximate to a ratio better than $(1-\frac{1}{e}) \approx 0.63$, unless P=NP. (In practice, $\greedy$ does much better, as we will see.)

The basic proof strategy for Theorem~\ref{thm:greedy-main} is similar to that of maximizing submodular functions, namely showing that in every iteration, the value of $f(\cdot)$ increases significantly. The key lemma is the following.

\begin{lemma} \label{lem:large-gain}
Let $S, T$ be two sets of columns, with $f_A(S) \ge f_A(T)$.  Then there exists $v \in S$ such that
\[ f_A(T \cup v) - f_A(T) \ge \sigma_{\min}(S) \frac{\big(f_A(S) - f_A(T)\big)^2}{4|S|f_A(S)}.\]
\end{lemma}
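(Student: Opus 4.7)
The plan is to extract a good single column $v$ by an averaging argument: show that the \emph{sum} $\sum_{v\in S}\bigl(f_A(T\cup v)-f_A(T)\bigr)$ is large enough, and then take $\max\ge$ weighted average. The basic identity is that for a unit vector $v$, writing $v' := (I-\Pi_T)v$ for the component of $v$ orthogonal to $\Span(T)$, the marginal gain equals the mass of $A$ projected onto the line through $v'$:
\[
  f_A(T\cup v)-f_A(T) \;=\; \|\Pi_{v'}A\|_F^2 \;=\; \frac{\|A^T v'\|^2}{\|v'\|^2}.
\]
Since $\max_v(a_v/b_v)\ge(\sum_v a_v)/(\sum_v b_v)$ whenever each $b_v>0$, and each $\|v'\|^2\le\|v\|^2=1$,
\[
  \max_{v\in S}\bigl(f_A(T\cup v)-f_A(T)\bigr) \;\ge\; \frac{\sum_{v\in S}\|A^T v'\|^2}{\sum_{v\in S}\|v'\|^2} \;\ge\; \frac{\sum_{v\in S}\|A^T v'\|^2}{|S|}.
\]
So it suffices to prove $\sum_{v\in S}\|A^T v'\|^2 \ge \sigma_{\min}(S)\cdot\alpha^2/(4\,f_A(S))$, where $\alpha := f_A(S)-f_A(T)\ge 0$.

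Let $V$ be the matrix whose columns are the vectors in $S$, and let $R:=I-\Pi_T$. Then $\sum_{v\in S}\|A^T v'\|^2 = \|V^T RA\|_F^2$. Because every column of $V$ lies in $\Span(V)$, we have $V^T\Pi_V = V^T$, so $V^T RA = V^T\Pi_V RA$. Moreover $VV^T \succeq \sigma_{\min}(S)\,\Pi_V$ as an operator inequality: both sides vanish on $\Span(V)^\perp$, while on $\Span(V)$ the eigenvalues of $VV^T$ are the nonzero eigenvalues of $V^T V$, each $\ge\sigma_{\min}(S)$. Using trace cyclicity and the fact that $(\Pi_V RA)(\Pi_V RA)^T$ is PSD,
\[
  \|V^T RA\|_F^2 \;=\; \Tr\bigl(VV^T\,\Pi_V RA\,(\Pi_V RA)^T\bigr) \;\ge\; \sigma_{\min}(S)\,\|\Pi_V RA\|_F^2.
\]
So the task reduces to proving $\|\Pi_V RA\|_F^2 \ge \alpha^2/(4\,f_A(S))$.

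This last step is the crux. The trick is to test $\Pi_V RA$ against $N := \Pi_S A$, whose columns lie in $\Span(V)$. Since $\Pi_V N = N$, Cauchy--Schwarz for the Frobenius inner product gives
\[
  \|\Pi_V RA\|_F \;\ge\; \frac{\langle N,\,RA\rangle_F}{\|N\|_F} \;=\; \frac{f_A(S)-\Tr(A^T\Pi_S\Pi_T A)}{\sqrt{f_A(S)}}.
\]
The cross term $\Tr(A^T\Pi_S\Pi_T A)=\sum_j\langle\Pi_S A_j,\Pi_T A_j\rangle$ is bounded by $\sqrt{f_A(S)\,f_A(T)}$ via Cauchy--Schwarz applied column-wise and then across columns. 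This yields $\|\Pi_V RA\|_F \ge \sqrt{f_A(S)}-\sqrt{f_A(T)} = \alpha/(\sqrt{f_A(S)}+\sqrt{f_A(T)}) \ge \alpha/(2\sqrt{f_A(S)})$, so $\|\Pi_V RA\|_F^2 \ge \alpha^2/(4\,f_A(S))$. Chaining the three bounds delivers the lemma. The main obstacle is choosing the right test matrix $N$: the choice $N=\Pi_S A$ is what introduces $f_A(S)$ in the denominator and produces the characteristic quadratic-in-$\alpha$ shape of the bound.
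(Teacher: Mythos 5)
Your proof is correct, and it takes a genuinely different route from the paper's. The paper first proves a single-column version of the claim for each column $u=A_j$ of $A$: it takes the optimal unit vector $x=\sum_i\alpha_i v_i\in\Span(S)$, bounds $|\iprod{x-\Pi_Tx,\,u}|\ge\sqrt{f_u(S)}-\sqrt{f_u(T)}$, and applies Cauchy--Schwarz to the coefficient vector, with $\sigma_{\min}(S)$ entering through $\sum_i\alpha_i^2\le 1/\sigma_{\min}(S)$; it then reassembles the matrix statement by summing over $j$ and applying Jensen's inequality to the distribution $f_{A_j}(S)/f_A(S)$. You instead argue at the matrix level throughout: writing $V$ for the matrix whose columns are the vectors of $S$, the quantity $\sigma_{\min}(S)$ enters via the operator inequality $VV^T\succeq\sigma_{\min}(S)\,\Pi_V$, and the role of the paper's witness $x$ is played by the test matrix $\Pi_S A$ in a Frobenius-norm Cauchy--Schwarz, giving the clean matrix-level bound $\norm{\Pi_V(I-\Pi_T)A}_F\ge\sqrt{f_A(S)}-\sqrt{f_A(T)}$, of which the paper only derives the per-column analogue \eqref{eq:dotprod-lb2}. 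Your version avoids the Jensen step entirely, is arguably more direct, and yields the same constant; one thing the paper's column-wise route buys is that the intermediate inequalities \eqref{eq:start}--\eqref{eq:temp6} are reused verbatim in the distributed analysis (Lemma~\ref{lem:opt-s}), so the per-column granularity is not wasted there. Two trivial points you should make explicit: columns $v\in S$ with $v\in\Span(T)$ (so $v'=0$) must be dropped from both sums before invoking the mediant inequality (they contribute zero to both the gain and to $\norm{A^Tv'}^2$, and $\sum\norm{v'}^2\le|S|$ survives the deletion), and the degenerate case $f_A(S)=0$ is vacuous.
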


Theorem~\ref{thm:greedy-main} follows easily from Lemma~\ref{lem:large-gain}, which we show at the end of the section. Thus let us first focus on proving the lemma.  Note that for submodular $f$, the analogous lemma simply has $\frac{f(S) - f(T)}{|S|}$ on the right-hand side (RHS).
The main ingredient in the proof of Lemma~\ref{lem:large-gain} is its {\em single vector} version:
\begin{lemma}\label{lem:one-vector}
Let $S, T$ be two sets of columns, with $f_u(S) \ge f_u(T)$. Suppose $S=\{v_1, \dots, v_k\}$.  Then
\[ \sum_{i=1}^k \Big( f_u(T \cup v_i) - f_u(T) \Big) \ge \sigma_{\min}(S) \frac{\big(f_u(S) - f_u(T)\big)^2}{4f_u(S)}.\]
\end{lemma}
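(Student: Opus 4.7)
The plan is to translate the marginal-gain inequalities into a geometric statement about the residual $u-u_T$, where $u_T := \Pi_T u$, and then exploit $\sigma_{\min}(S)$ to lower bound a certain sum of inner products.

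First I would rewrite each marginal gain cleanly. For any unit vector $v$, let $v^\perp := v - \Pi_T v$; then $\Pi_{T\cup v}u - \Pi_T u$ is the component of $u$ along the unit direction $v^\perp/\|v^\perp\|$, so
\[
f_u(T\cup v) - f_u(T) \;=\; \frac{\langle u, v^\perp\rangle^2}{\|v^\perp\|^2}.
\]
Since $u-u_T \perp \Span(T)$, we have $\langle u,v^\perp\rangle = \langle u-u_T, v\rangle$, and since $v$ is a unit vector, $\|v^\perp\|^2 \le 1$. Thus each term in the LHS is at least $\langle u-u_T, v_i\rangle^2$, and summing,
\[
\sum_{i=1}^k \bigl(f_u(T\cup v_i)-f_u(T)\bigr)\;\ge\;\sum_{i=1}^k \langle u-u_T, v_i\rangle^2 \;=\;\bigl\|S^\top (u-u_T)\bigr\|_2^2,
\]
where $S$ denotes the matrix with columns $v_1,\dots,v_k$.

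Next I would introduce $\sigma_{\min}(S)$. Writing $y := \Pi_S(u-u_T)$, we have $S^\top(u-u_T)=S^\top y$, and since $y\in\mathrm{col}(S)$ we can write $y=S\alpha$; then $\|S^\top y\|^2 = \alpha^\top (S^\top S)^2\alpha \ge \sigma_{\min}(S)\cdot\alpha^\top S^\top S\alpha = \sigma_{\min}(S)\|y\|^2$, by diagonalizing $S^\top S$ and noting all its eigenvalues are at least $\sigma_{\min}(S)$. So
\[
\sum_{i=1}^k \bigl(f_u(T\cup v_i)-f_u(T)\bigr)\;\ge\;\sigma_{\min}(S)\,\bigl\|\Pi_S(u-u_T)\bigr\|_2^2.
\]

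The last and most delicate step is to lower bound $\|\Pi_S(u-u_T)\|_2^2$ by $(f_u(S)-f_u(T))^2/(4f_u(S))$. Since $\Pi_S u = u_S$ and both $u_S$ and $\Pi_S u_T$ lie in $\Span(S)$, the reverse triangle inequality gives $\|u_S - \Pi_S u_T\|_2 \ge \bigl|\,\|u_S\| - \|\Pi_S u_T\|\,\bigr|$, and then $\|\Pi_S u_T\|\le\|u_T\|$ together with the assumption $\|u_S\|\ge\|u_T\|$ yields $\|u_S\|-\|\Pi_S u_T\|\ge \|u_S\|-\|u_T\|\ge 0$. Squaring and multiplying by $(\|u_S\|+\|u_T\|)^2 \le (2\|u_S\|)^2 = 4f_u(S)$ in the denominator gives
\[
\bigl\|\Pi_S(u-u_T)\bigr\|_2^2 \;\ge\; \bigl(\|u_S\|-\|u_T\|\bigr)^2 \;\ge\; \frac{(\|u_S\|^2-\|u_T\|^2)^2}{4\|u_S\|^2} \;=\; \frac{(f_u(S)-f_u(T))^2}{4 f_u(S)}.
\]
Combining this with the previous display yields the lemma.

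The main obstacle is the third step: $u-u_T$ is not generally in $\Span(S)$, so applying $\sigma_{\min}$ directly would fail; the trick is to project onto $\Span(S)$ first (which is free, since the inner products against $v_i\in S$ are unchanged) and then control $\|\Pi_S(u-u_T)\|$ via the reverse triangle inequality and the contraction $\|\Pi_S u_T\|\le\|u_T\|$. The extra factor $f_u(S)$ in the denominator comes naturally from converting $(\|u_S\|-\|u_T\|)^2$ into a squared-gap $(f_u(S)-f_u(T))^2$, which is what the global induction in Lemma \ref{lem:large-gain} needs.
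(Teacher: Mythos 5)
Your proof is correct, and it takes a genuinely different route through the middle of the argument than the paper does. Both proofs open with the same identity $f_u(T\cup v_i)-f_u(T)=\iprod{u,v_i'}^2$ for the normalized residual $v_i'$, and both close with the same algebra $\big(\sqrt{f_u(S)}-\sqrt{f_u(T)}\big)^2\ge \big(f_u(S)-f_u(T)\big)^2/(4f_u(S))$. In between, the paper picks the maximizing unit vector $x=\sum_i\alpha_i v_i\in\Span(S)$ for $\iprod{u,\cdot}^2$, uses $\sum_i\alpha_i^2\le 1/\sigma_{\min}(S)$, and applies Cauchy--Schwarz to $\iprod{x-\Pi_T x,\,u}=\sum_i\alpha_i\norm{v_i-\Pi_T v_i}_2\iprod{v_i',u}$. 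You instead discard the normalization immediately (via $\norm{v_i-\Pi_T v_i}_2\le 1$) to get the lower bound $\sum_i\iprod{u-\Pi_T u,\,v_i}^2=\norm{S^\top(u-\Pi_T u)}_2^2$, and you bring in $\sigma_{\min}(S)$ through the operator inequality $(S^\top S)^2\succeq\sigma_{\min}(S)\,S^\top S$ applied on the column space, landing on $\sigma_{\min}(S)\norm{\Pi_S(u-\Pi_T u)}_2^2$; the reverse triangle inequality together with the contraction $\norm{\Pi_S\Pi_T u}_2\le\norm{\Pi_T u}_2$ then finishes. The two arguments are dual in the sense that the paper's key quantity satisfies $|\iprod{x-\Pi_T x,\,u}|=|\iprod{x,\,\Pi_S(u-\Pi_T u)}|\le\norm{\Pi_S(u-\Pi_T u)}_2$, so your intermediate bound dominates the paper's and yields the identical final constant. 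Your version is arguably cleaner: it avoids exhibiting an explicit maximizer and the Cauchy--Schwarz step, replacing them with a one-line spectral inequality for the Gram matrix, and it handles the degenerate cases ($v_i\in\Span(T)$, or $f_u(S)=0$) just as trivially. (Like the paper, you rely on the columns of $S$ being unit vectors, which is the standing normalization assumed in Section~\ref{section-2}.)
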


Let us first see why Lemma~\ref{lem:one-vector} implies Lemma~\ref{lem:large-gain}. Observe that for any set of columns $T$, $f_A (T) = \sum_{j} f_{A_j} (T)$ (sum over the columns), by definition. For a column $j$, let us define $\delta_j = \min \{ 1, \frac{f_{A_j}(T)}{f_{A_j}(S)}\}$. Now, using Lemma~\ref{lem:one-vector} and plugging in the definition of $\delta_j$, we have
\begin{align}
  & \frac{1}{\sigma_{\min}(S)} \sum_{i=1}^k  \big( f_A(T \cup v_i) - f_A(T) \big) \label{eq:start}\\
  & \quad =  \frac{1}{\sigma_{\min}(S)} \sum_{j = 1}^n \sum_{i=1}^k \big( f_{A_j}(T \cup v_i) - f_{A_j}(T) \big) \notag\\
& \quad \geq \sum_{j=1}^n \frac{ (1-\delta_j)^2 f_{A_j}(S)}{4} \label{eq:temp3}\\
& \quad = \frac{f_A(S)}{4} \sum_{j=1}^n (1 - \delta_j)^2 \frac{f_{A_j}(S)}{f_A(S)} \label{eq:temp4}\\ 
& \quad \geq \frac{f_A(S)}{4} \left( \sum_{j=1}^n (1 - \delta_j) \frac{f_{A_j}(S)}{f_A(S)}\right)^2 \label{eq:temp5} \\ 
& \quad = \frac{1}{4 f_A(S)} \Big(\sum_{j=1}^n \max\{ 0, f_{A_j}(S) - f_{A_j}(T) \}\Big)^2 \label{eq:temp6} \\
& \quad \ge \frac{1}{4 f_A(S)} \Big(f_A(S) - f_A(T) \Big)^2 \label{eq:temp8}
\end{align}
To get \eqref{eq:temp5}, we used Jensen's inequality ($\mathbb{E}[X^2] \geq ( \mathbb{E}[X])^2$) treating $\frac{f_{A_j}(S)}{f_{A}(S)}$ as a probability distribution over indices $j$. Thus it follows that there exists an index $i$ for which the gain is at least a $\frac{1}{|S|}$ factor, proving Lemma~\ref{lem:large-gain}.

\begin{proof}[Proof of Lemma~\ref{lem:one-vector}]
Let us first analyze the quantity $f_u(T \cup v_i) - f_u(T)$, for some $v_i \in S$.  As mentioned earlier, we may assume the $v_i$ are normalized. If $v_i \in \Span(T)$, this quantity is $0$. Thus we can assume that such $v_i$ have been removed from $S$. Now, adding $v_i$ to $T$ gives a gain because of the component of $v_i$ orthogonal to $T$, i.e., $v_i - \Pi_T v_i$, where $\Pi_T$ denotes the projector onto $\Span(T)$. Define
\[ v_i' = \frac{v_i - \Pi_T v_i}{\norm{v_i - \Pi_T v_i}}_2.\] By definition, $\Span(T \cup v_i) = \Span(T \cup v_i')$. Thus the projection of a vector $u$ onto $\Span(T \cup v_i')$ is $\Pi_T u + \iprod{u, v_i'} v_i'$, which is a vector whose squared length is
$\norm{\Pi_T u}^2 + \iprod{u, v_i'}^2 = f_u(T) + \iprod{u, v_i'}^2$.
This implies that
\begin{equation}\label{eq:gain-single}
f_u(T \cup v_i ) - f_u(T) = \iprod{u, v_i'}^2.
\end{equation}

Thus, to show the lemma, we need a lower bound on $\sum_i \iprod{u, v_i'}^2$. Let us start by observing
that a more explicit definition of $f_u(S)$ is the squared-length of the projection of $u$ onto $\Span(S)$, i.e. $f_u(S) = \max_{x \in \Span(S), \norm{x}_2 = 1} \iprod{u, x}^2$. Let $x = \sum_{i=1}^k \alpha_i v_i$ be a maximizer. Since $\norm{x}_2=1$, by the definition of the smallest squared singular value, we have $\sum_i \alpha_i^2 \le \frac{1}{\sigma_{\min}(S)}$.  Now, decomposing $x = \Pi_T x + x'$, we have
\[ f_u(S) =  \langle x, u \rangle^2  
= \langle x' + \Pi_T x, \; u \rangle^2 
= (\langle x', u \rangle + \langle \Pi_T x, u \rangle)^2.\]
Thus (since the worst case is when all signs align),
\begin{align}
|\iprod{ x', u}| &\ge \sqrt{f_u(S)} - |\langle \Pi_T x, u \rangle| 
 \ge \sqrt{f_u(S)} - \sqrt{f_u(T)} \notag \\ 
&= \frac{f_u(S) - f_u(T)}{\sqrt{f_u(S)}+ \sqrt{f_u(T)}} \ge \frac{f_u(S) - f_u(T)}{2\sqrt{f_u(S)}}. \label{eq:dotprod-lb2}
\end{align}
where we have used the fact that $|\iprod{\Pi_T x, u}|^2 \le f_u(T)$, which is true from the definition of $f_u(T)$ (and since $\Pi_T x$ is a vector of length $\le 1$ in $\Span(T)$). 

Now, because $x = \sum_i \alpha_i v_i$, we have $x' = x - \Pi_T x = \sum_i \alpha_i(v_i - \Pi_T v_i) = \sum_i \alpha_i\norm{v_i - \Pi_Tv_i}_2v_i'$. Thus,
\begin{align*}
\iprod{x', u}^2 &= 
\big( \sum_i \alpha_i \norm{v_i - \Pi_Tv_i}_2 \iprod{v_i' , u} \big)^2
\\ &\le \big( \sum_i \alpha_i^2 \norm{v_i - \Pi_Tv_i}_2^2 \big) \big( \sum_i \iprod{v_i', u}^2 \big)
\\ &\le \big( \sum_i \alpha_i^2 \big) \big( \sum_i \iprod{v_i', u}^2 \big).
\end{align*}
where we have used Cauchy-Schwartz, and then the fact that $\|v_i - \Pi_Tv_i\|_2 \leq 1$ (because $v_i$ are unit vectors). Finally, we know that $\sum_i \alpha_i^2 \le \frac{1}{\sigma_{\min}(S)}$, which implies 
\[ \sum_i \iprod{v_i', u}^2 \ge \sigma_{\min}(S)  \iprod{x', u}^2 \ge \sigma_{\min}(S) \frac{(f_u(S)- f_u(T))^2}{4f_u(S)}\]
Combined with~\eqref{eq:gain-single}, this proves the lemma. 
\end{proof}
%

 \begin{proof}[Proof of Theorem~\ref{thm:greedy-main}]
 For notational convenience, let $\sigma = \sigma_{min}(OPT_k)$ and $F = f_A(OPT_k)$. Define $\Delta_0 = F$, $\Delta_1 = \frac{\Delta_0}{2}$, $\dots$, $\Delta_{i+1} = \frac{\Delta_i}{2}$ until $\Delta_N \leq \eps F$. Note that the gap $f_A(OPT_k) - f_A(T_0) = \Delta_0$. We show that it takes at most $\frac{8kF}{\sigma \Delta_i}$ iterations (i.e. additional columns selected) to reduce the gap from $\Delta_i$ to $\frac{\Delta_i}{2} = \Delta_{i+1}$. To prove this, we invoke Lemma \ref{lem:large-gain} to see that the gap filled by $\frac{8kF}{\sigma \Delta_i}$ iterations is at least $\frac{8kF}{\sigma \Delta_i} \cdot \sigma \frac{(\frac{\Delta_i}{2})^2}{4kF}
 = \frac{\Delta_i}{2} = \Delta_{i+1}$. Thus the total number of iterations $r$ required to get a gap of at most $\Delta_N \leq \eps F$ is:
\vspace{-0.1cm}
\[
 r
\leq \sum_{i=0}^{N-1} \frac{8kF}{\sigma \Delta_i}
= \frac{8kF}{\sigma} \sum_{i=0}^{N-1} \frac{2^{i-N+1}}{\Delta_{N-1}} 
 < \frac{16k}{\eps \sigma}
.\]
 where the last step is due to $\Delta_{N-1} > \eps F$ and $\sum_{i=0}^{N-1} 2^{i-N+1} < 2$. Therefore, after $r < \frac{16k}{\eps \sigma}$ iterations, we have $f_A(OPT_k) - f_A(T_r) \leq \eps f_A(OPT_k)$. Rearranging proves the lemma.
 \end{proof}

\section{Distributed Greedy Algorithm} \label{section-4}
We will now analyze the distributed version of the greedy algorithm that was discussed earlier. We show that in one {\em round}, we will find a set of size $O(k)$ as before, that has an objective value $\Omega(f(\opt_k)/\kappa)$, where $\kappa$ is a condition number (defined below). We also combine this with our earlier ideas to say that if we perform $O(\kappa / \eps)$ {\em rounds} of \distgreedy, we get a $(1-\eps)$ approximation (Theorem~\ref{thm:core-set-2}).

\subsection{Analyzing one round}

We consider an instance of $\gcss(A, B, k)$, and let $\opt$ denote an optimum set of $k$ columns.
 Let $\ell$ denote the number of machines available.
 The columns (of $B$) are partitioned across machines, such that machine $i$ is given columns $T_i$. It runs $\greedy$ as explained earlier and outputs $S_i \subset T_i$ of size $k' = \frac{32k}{\sigma_{\min}(OPT)}$. Finally, all the $S_i$ are moved to one machine and we run $\greedy$ on their union and output a set $S$ of size $k'' = \frac{12k}{\sigma_{\min}(OPT)}$. 
Let us define $\kappa(\opt) = \frac{\sigma_{\max}(\opt)}{\sigma_{\min}(\opt)}$.

\begin{thm}\label{thm:distributed-main}
Consider running $\distgreedy$ on an instance of $\gcss(A, B, k)$. We have
\[ \E[ \max\{ f_A(S), \max_i \{f_A(S_i)\}\}] \ge \frac{f(\opt)}{8\cdot \kappa(\opt)}.\] 
\end{thm}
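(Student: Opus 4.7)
The plan is to adapt the randomized composable core-sets framework of \cite{Mirrokni} to our non-submodular cost function $f_A$. The high-level strategy is a two-case analysis based on how the random partition disperses the optimum columns across machines; in both cases I will argue that $\max\{f_A(S), \max_i f_A(S_i)\} \ge \alpha$ deterministically on every outcome of the partition, where $\alpha := f_A(\opt)/(8\kappa(\opt))$. The expectation bound in the theorem then follows immediately.

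\textbf{Preliminary reduction via the single-machine analysis.} Fix the random partition and let $\opt^{(i)} := \opt \cap T_i$. Since $\opt^{(i)} \subseteq T_i$ has at most $k$ columns, and by Cauchy interlacing $\sigma_{\min}(\opt^{(i)}) \ge \sigma_{\min}(\opt)$, the local budget $k' = 32k/\sigma_{\min}(\opt)$ is large enough to invoke Theorem~\ref{thm:greedy-main} with $\eps = \tfrac{1}{2}$, treating $\opt^{(i)}$ as a feasible local target. This yields $f_A(S_i) \ge \tfrac{1}{2} f_A(\opt^{(i)})$ for every $i$. An analogous statement will be needed for the final-stage greedy on $\cup_i S_i$.

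\textbf{Case analysis.} Call the partition \emph{concentrated} if some machine $i^*$ satisfies $f_A(\opt^{(i^*)}) \ge f_A(\opt)/(4\kappa)$, and \emph{dispersed} otherwise. In the concentrated case, the preliminary reduction immediately gives $f_A(S_{i^*}) \ge \alpha$, so the $\max_i f_A(S_i)$ term suffices. In the dispersed case, every $f_A(\opt^{(i)}) < f_A(\opt)/(4\kappa)$, meaning $\opt$'s mass is spread widely. Here I plan to show that $\cup_i S_i$ collectively contains enough structure for the final greedy (with $k'' = 12k/\sigma_{\min}(\opt)$ steps) to output $S$ with $f_A(S) \ge \alpha$. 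Specifically, the preliminary reduction implies that $\Span(S_i)$ approximates $\Span(\opt^{(i)})$ in a quantitative sense controlled by $\sigma_{\min}(\opt)$; dispersion then forces the union $\cup_i \Span(S_i)$ to cover $\Span(\opt)$ up to a multiplicative $\kappa$ loss, and Theorem~\ref{thm:greedy-main} applied to the final greedy on $\cup_i S_i$ (with a suitable $\opt$-like target of size $k$ living inside $\cup_i S_i$) closes the argument.

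\textbf{Main obstacle.} The hardest step is the dispersed case. For submodular $f$, the arguments of \cite{Mirrokni} bound residual marginal gains of missed elements directly; here, because $f_A$ is not submodular (Remark~\ref{rem:not-submodular}), we cannot port that over. Instead, I would need to upgrade the function-value guarantee $f_A(S_i) \ge \tfrac{1}{2} f_A(\opt^{(i)})$ into a \emph{subspace}-approximation guarantee relating $\Pi_{S_i}$ to $\Pi_{\opt^{(i)}}$, reusing the geometric decomposition in the proof of Lemma~\ref{lem:one-vector} (writing the optimum basis vectors in terms of their projections on and orthogonal to $\Span(S_i)$). A secondary technical piece is the spectral fact that a uniformly random subset of columns of a $k$-column matrix with condition number $\kappa$ retains in expectation an $\Omega(1/\kappa)$ fraction of the projection-trace $f_A(\opt)$; this is where the $\kappa(\opt)$ factor naturally enters. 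Both steps look tractable by iterating Lemma~\ref{lem:large-gain}, but they are the heart of the proof and the reason the clean submodular bound degrades by a condition-number factor.
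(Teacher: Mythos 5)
There is a genuine gap, and it sits exactly where you flag the "main obstacle": the dispersed case cannot be closed with your definitions. You set $\opt^{(i)} = \opt \cap T_i$ and hope to upgrade the value guarantee $f_A(S_i) \ge \tfrac12 f_A(\opt^{(i)})$ into a subspace guarantee $\Span(S_i) \approx \Span(\opt^{(i)})$. No such upgrade exists: \greedy\ on machine $i$ may ignore $\opt^{(i)}$ entirely and pick columns spanning a completely different subspace that happens to carry the same Frobenius mass of $A$. Worse, all $\ell$ machines may pick columns spanning the \emph{same} such subspace, in which case $\cup_i S_i$ is no better than a single $S_i$, each of which is worth less than $f_A(\opt)/(8\kappa)$ in the dispersed case. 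Your claim that the bound holds "deterministically on every outcome of the partition" is in fact provably false: Theorem~\ref{thm:rand-part} (Appendix~\ref{app:rand-part}) exhibits a partition --- a possible, if unlikely, outcome of the random split --- on which every local algorithm followed by the aggregation step achieves only $O(\log k/\sqrt{k})\,f_A(\opt_k)$ with $\kappa(\opt)=1$. So the randomness of the partition must enter the argument quantitatively, not merely as a case split.

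The paper's proof circumvents both problems by replacing $\opt\cap T_i$ with the sets $\opt_i^S = \{x \in \opt : x \in \greedy(A, T_i\cup x, k')\}$ and $\opt_i^{NS} = \opt\setminus\opt_i^S$, defined by a \emph{hypothetical} run of greedy rather than by where $x$ actually landed. If some $f_A(\opt_i^{NS})$ is large, Lemma~\ref{lem:opt-ns} shows $f_A(S_i)$ is large (greedy declined those columns only because it found something at least as good --- this is your "concentrated" case done right). Otherwise Lemma~\ref{lem:additivity} forces $f_A(\opt_i^S)$ to be large for \emph{all} $i$, and the crucial observation is that the event $x\in\opt_i^S$ is independent of which machine $x$ was sent to; hence $\Pr[x\in\cup_i S_i] = \E\bigl[\tfrac{1}{\ell}\sum_i [[x\in\opt_i^S]]\bigr]$, so many genuine optimum columns survive into the aggregation pool in expectation, and a step-by-step analysis of the final greedy (Lemma~\ref{lem:opt-s}, reusing \eqref{eq:start}--\eqref{eq:temp6}) yields $\E[f_A(S)] \ge f_A(\opt)/(8\kappa)$. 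Your preliminary reduction and the use of Lemma~\ref{lem:additivity} to trigger the dichotomy are in the right spirit, but without the $\opt_i^S/\opt_i^{NS}$ device and the survival-probability identity, the second case of your argument does not go through.
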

The key to our proof are the following definitions:
\begin{align*}
\opt_i^S &= \{x \in \opt \; : \; x \in \greedy(A, T_i \cup x, k') \}\\ 
\opt_i^{NS} &= \{x \in \opt \; : \; x \not \in \greedy(A, T_i \cup x, k') \}
\end{align*}
In other words, $OPT_i^S$ contains all the vectors in $OPT$ that would have been selected by machine $i$ if they had been added to the input set $T_i$. By definition, the sets $(OPT_i^S, OPT_i^{NS})$ form a partition of $OPT$ for every $i$.

{\bf Proof outline.} Consider any partitioning $T_1, \dots, T_\ell$, and consider the sets $\opt_i^{NS}$. Suppose one of them (say the $i$th) had a large value of $f_A(\opt_i^{NS})$. Then, we claim that $f_A(S_i)$ is also large. The reason is that the greedy algorithm does {\em not} choose to pick the elements of $\opt_i^{NS}$ (by definition) -- this can only happen if it ended up picking vectors that are ``at least as good''. This is made formal in Lemma~\ref{lem:opt-ns}.  Thus, we can restrict to the case when {\em none} of $f_A(\opt_i^{NS})$ is large. In this case, Lemma~\ref{lem:additivity} shows that $f_A(\opt_i^{S})$ needs to be large for each $i$. Intuitively, it means that most of the vectors in $\opt$ will, in fact, be picked by $\greedy$ (on the corresponding machines), and will be considered when computing $S$. The caveat is that we might be unlucky, and for every $x \in \opt$, it might have happened that it was sent to machine $j$ for which it was not part of $\opt_j^{S}$. We show that this happens with low probability, and this is where the random partitioning is crucial (Lemma~\ref{lem:opt-s}).  This implies that either $S$, or one of the $S_i$ has a large value of $f_A(\cdot)$.

Let us now state two lemmas, and defer their proofs to Sections~\ref{app:opt-ns} and~\ref{app:additivity} respectively. 

\begin{lemma} \label{lem:opt-ns}
For $S_i$ of size $k' = \frac{32 k}{\sigma_{\min}(OPT)}$, we have
\[ f(S_i) \geq \frac{f_A(OPT_i^{NS})}{2} ~\text{ for all $i$.}\] 
\end{lemma}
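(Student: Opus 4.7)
The plan is: (i) show that $\greedy(A, T_i \cup \opt_i^{NS}, k') = S_i$, i.e., enlarging greedy's universe by $\opt_i^{NS}$ does not alter its trajectory; and (ii) apply the analysis underlying Theorem~\ref{thm:greedy-main} on this enlarged universe, with $\opt_i^{NS}$ in the role of the target set, to conclude $f_A(S_i) \ge \tfrac{1}{2} f_A(\opt_i^{NS})$ given the budget $k' = \tfrac{32 k}{\sigma_{\min}(\opt)}$.

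For step (i), I would induct on the iteration index $j$, coupling three runs: $\greedy(A, T_i, k')$, $\greedy(A, T_i \cup \{x\}, k')$ for each $x \in \opt_i^{NS}$, and $\greedy(A, T_i \cup \opt_i^{NS}, k')$. Assume inductively that after $j-1$ iterations all three runs share the common prefix $\{s_1,\ldots,s_{j-1}\} \subseteq T_i$ of the run on $T_i$ alone. The definition of $\opt_i^{NS}$ forces $\greedy(A, T_i \cup \{x\}, k')$ to never pick $x$, so its $j$-th pick comes from $T_i$ and, by the equality of marginal gains on $T_i$-columns, it agrees with the $j$-th pick $s_j$ of $\greedy(A, T_i, k')$. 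In particular $f_A(\{s_1,\ldots,s_{j-1}\} \cup \{x\}) \le f_A(\{s_1,\ldots,s_j\})$ for every $x \in \opt_i^{NS}$. Therefore, in $\greedy(A, T_i \cup \opt_i^{NS}, k')$ as well, $s_j$ attains the maximum marginal gain at iteration $j$, and a consistent tie-breaking convention (favoring $T_i$-columns over $\opt_i^{NS}$-columns) yields $s_j$ as its $j$-th pick. The induction completes step (i).

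For step (ii), I would replay the geometric-doubling argument used in the proof of Theorem~\ref{thm:greedy-main} with $\opt_i^{NS}$ substituted for $\opt_k$. Crucially, the only place the optimum enters is through Lemma~\ref{lem:large-gain}, which merely requires $f_A(S) \ge f_A(T)$, not that $S$ be the global maximizer. Tracking the same halving of the gaps $\Delta_j$ then shows that $r = \tfrac{16\, |\opt_i^{NS}|}{\eps\, \sigma_{\min}(\opt_i^{NS})}$ iterations suffice for $f_A(T_r) \ge (1-\eps) f_A(\opt_i^{NS})$. Setting $\eps = 1/2$ and using $|\opt_i^{NS}| \le k$ together with $\sigma_{\min}(\opt_i^{NS}) \ge \sigma_{\min}(\opt)$ (the smallest squared singular value can only grow upon deleting columns, since the infimum is then taken over a subset of admissible unit vectors), the required number of iterations is at most $\tfrac{32 k}{\sigma_{\min}(\opt)} = k'$. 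Combining with step (i) gives the lemma.

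The main obstacle is step (i): the defining property of $\opt_i^{NS}$ is elementwise (each $x$, added singly, is rejected), whereas we need a collective statement about the run when the entire set $\opt_i^{NS}$ is added at once. The inductive coupling between the auxiliary single-element runs and the true run on $T_i \cup \opt_i^{NS}$ is exactly the bridge between these two guarantees, and it hinges on a consistent tie-breaking convention. Once step (i) is established, step (ii) is essentially a direct reinvocation of the single-machine analysis already developed in Section~\ref{section-3}.
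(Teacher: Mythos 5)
Your proof is correct and takes essentially the same route as the paper's: the paper likewise observes that running $\greedy$ on $T_i$ is identical to running it on $T_i \cup \opt_i^{NS}$ (stated in one line, "because by definition, the added elements are not chosen") and then applies the Theorem~\ref{thm:greedy-main} analysis against the possibly non-optimal benchmark $\opt_i^{NS}$ with $\eps = \tfrac{1}{2}$, using $|\opt_i^{NS}| \le k$ and $\sigma_{\min}(\opt_i^{NS}) \ge \sigma_{\min}(\opt)$. Your inductive tie-breaking coupling in step (i) simply spells out the elementwise-to-collective bridge that the paper leaves implicit.
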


\begin{lemma} \label{lem:additivity}
For any matrix $A$, and any partition $(I, J)$ of $OPT$:
\begin{equation}\label{eq:to-prove-additivity}
f_A(I) + f_A(J) \geq \frac{f_A(\opt)}{2\kappa(\opt)}.
\end{equation}
\end{lemma}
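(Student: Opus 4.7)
My plan is to reduce to the single-vector case and then work with the Gram matrix of $\opt$. Since $f_A(V) = \sum_j f_{A_j}(V)$ by definition, it suffices to establish, for every vector $u$ (playing the role of a column of $A$),
\[ f_u(I) + f_u(J) \ge \frac{f_u(\opt)}{\kappa(\opt)}, \]
which is in fact a factor of $2$ stronger than what the lemma demands. Summing over the columns of $A$ then yields the lemma. We may assume the columns of $\opt$ are linearly independent; otherwise $\sigma_{\min}(\opt) = 0$, so $\kappa(\opt) = \infty$ and the bound is vacuous.

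Let $V$ denote the matrix whose columns are the (unit) vectors in $\opt$, and let $G = V^{\!\top} V$ be its Gram matrix, so that $\sigma_{\min}(\opt) = \lambda_{\min}(G)$ and $\sigma_{\max}(\opt) = \lambda_{\max}(G)$. Writing $c = V^{\!\top} u$, the standard formula for orthogonal projection gives $\Pi_V = V G^{-1} V^{\!\top}$, hence $f_u(\opt) = c^{\!\top} G^{-1} c$. Analogously, letting $G_{II} = V_I^{\!\top} V_I$ and $c_I = V_I^{\!\top} u$ (and similarly for $J$), we have $f_u(I) = c_I^{\!\top} G_{II}^{-1} c_I$ and $f_u(J) = c_J^{\!\top} G_{JJ}^{-1} c_J$. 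Because $(I,J)$ partitions $\opt$, the entries of $c$ split as $(c_I, c_J)$, so $\norm{c_I}_2^2 + \norm{c_J}_2^2 = \norm{c}_2^2$.

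Now I apply two elementary eigenvalue bounds. First, $G^{-1}$ has largest eigenvalue $1/\lambda_{\min}(G)$, so
\[ f_u(\opt) = c^{\!\top} G^{-1} c \le \frac{\norm{c}_2^2}{\sigma_{\min}(\opt)}. \]
Second, $G_{II}$ and $G_{JJ}$ are principal submatrices of $G$, so by Cauchy interlacing $\lambda_{\max}(G_{II}), \lambda_{\max}(G_{JJ}) \le \lambda_{\max}(G) = \sigma_{\max}(\opt)$. Consequently
\[ f_u(I) \ge \frac{\norm{c_I}_2^2}{\sigma_{\max}(\opt)}, \qquad f_u(J) \ge \frac{\norm{c_J}_2^2}{\sigma_{\max}(\opt)}. \]
Adding these and combining with the first bound gives
\[ f_u(I) + f_u(J) \ge \frac{\norm{c}_2^2}{\sigma_{\max}(\opt)} \ge \frac{\sigma_{\min}(\opt)}{\sigma_{\max}(\opt)} \, f_u(\opt) = \frac{f_u(\opt)}{\kappa(\opt)}, \]
as desired.

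There is not really a major obstacle here; the only subtlety is recognizing the right spectral inequalities. The slight care needed is handling the possibly rank-deficient case (which is trivial as noted), and verifying the inequality $\lambda_{\max}(G_{II}) \le \lambda_{\max}(G)$, which is immediate from the variational characterization since any unit $x \in \Real^{|I|}$ extends by zeros to a unit vector $\tilde x$ with $x^{\!\top} G_{II} x = \tilde x^{\!\top} G \tilde x$. Everything else is bookkeeping.
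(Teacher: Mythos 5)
Your proof is correct, and it takes a genuinely different route from the paper's. Both arguments reduce to a single column $u$ and sum over the columns of $A$, but from there they diverge. The paper takes the unit vector $x = \sum_i \alpha_i v_i$ in $\Span(\opt)$ achieving $f_u(\opt) = \iprod{x,u}^2$, splits it as $x = x_I + x_J$, and applies $(a+b)^2 \le 2(a^2+b^2)$ followed by $\iprod{x_I,u}^2 \le \norm{x_I}^2 f_u(I)$, with $\norm{x_I}^2 + \norm{x_J}^2 \le \sigma_{\max}(\opt)\sum_i \alpha_i^2 \le \kappa(\opt)$; the AM--QM step is where the factor $2$ enters. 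You instead work in coefficient space: writing $c = V^{\top}u$ and using $\Pi_V = VG^{-1}V^{\top}$, you compare each of $f_u(\opt)$, $f_u(I)$, $f_u(J)$ to the corresponding piece of $\norm{c}^2$ via Rayleigh-quotient bounds, and the splitting $\norm{c}^2 = \norm{c_I}^2 + \norm{c_J}^2$ is exact, so no constant is lost; the role of the paper's $\sigma_{\max}(I) \le \sigma_{\max}(\opt)$ is played by your interlacing bound $\lambda_{\max}(G_{II}) \le \lambda_{\max}(G)$, which you justify correctly by zero-extension, and your handling of the rank-deficient case ($\sigma_{\min}(\opt)=0$ makes the bound vacuous) matches the paper's implicit assumption. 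The payoff of your route is the stronger inequality $f_A(I)+f_A(J) \ge f_A(\opt)/\kappa(\opt)$, which would propagate to slightly better constants in Theorem~\ref{thm:distributed-main}; the payoff of the paper's route is that it avoids explicit Gram-matrix inverses and stays entirely within the elementary inner-product manipulations already used in Lemma~\ref{lem:one-vector}, at the cost of the factor $2$.
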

Our final lemma is relevant when none of $f_A(\opt_i^{NS})$ are large and, thus, $f_A(\opt_i^{S})$ is large for {\em all} $i$ (due to Lemma~\ref{lem:additivity}). In this case, Lemma~\ref{lem:opt-s} will imply that the expected value of $f(S)$ is large.

Note that $T_i$ is a random partition, so the $T_i$, the $\opt_i^{S}$, $\opt_i^{NS}$, $S_i$, and $S$ are all random variables. However, all of these value are fixed given a partition $\{T_i\}$. In what follows, we will write $f(\cdot)$ to mean $f_A(\cdot)$.

\begin{lemma}\label{lem:opt-s}
For a random partitioning $\{T_i\}$, and $S$ of size $k'' = \frac{12 k}{\sigma_{\min}(OPT)}$, we have
\begin{equation}\label{eq:main-lem-to-show}
\E[f(S)] \ge \frac{1}{2}\E \left[ \frac{\sum_{i=1}^{\ell} f(OPT_i^S)}{\ell}\right].
\end{equation}
\end{lemma}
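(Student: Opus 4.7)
The plan is to show in two conceptual steps: (a) the aggregation step's greedy on $\bigcup_j S_j$ essentially recovers $W := \opt \cap (\bigcup_j S_j)$, the set of $\opt$-vectors that survived the distributed phase; and (b) in expectation, $f(W)$ is at least a $1/\ell$ fraction of $\sum_i f(\opt_i^S)$.

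For (a), I would first observe the structural identity $W = \bigsqcup_j (\opt_j^S \cap T_j)$: an $\opt$-element $x$ lies in $S_j$ iff $x \in T_j$ (so $T_j \cup \{x\} = T_j$) and $x$ is picked by $\greedy(A, T_j, k')$, which is exactly $x \in \opt_j^S \cap T_j$. Then I would apply Lemma~\ref{lem:large-gain} iteratively with reference set $W$ (in place of the ``true'' $k$-subset optimum of the aggregation subproblem) against the evolving greedy set. Because $W \subseteq \opt$ forces $|W| \le k$ and $\sigma_{\min}(W) \ge \sigma_{\min}(\opt)$, the same halving-the-gap argument used in the proof of Theorem~\ref{thm:greedy-main} yields $f(S) \ge f(W)/2$ after at most $8k/\sigma_{\min}(\opt) \le k''$ iterations.

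For (b), the goal is $\E[f(W)] \ge \tfrac{1}{\ell}\sum_i \E[f(\opt_i^S)]$. By exchangeability of the random partition across machine indices, $\E[f(\opt_i^S)] = \E[f(\opt_1^S)]$ for all $i$, so the target reduces to $\E[f(W)] \ge \E[f(\opt_1^S)]$. The key symmetry: fix $x \in \opt$ and condition on the partition $U = (U_1,\ldots,U_\ell)$ of $B \setminus \{x\}$ together with an independent uniform assignment $J(x) \in [\ell]$ of $x$; since in every case $T_i \cup \{x\} = U_i \cup \{x\}$, the indicator $\mathbf{1}[x \in \opt_i^S]$ is a deterministic function of $U_i$ alone, independent of $J(x)$. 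Consequently $\Pr[x \in W \mid U] = \tfrac{1}{\ell}\sum_i \mathbf{1}[x \in \opt_i^S]$, giving in particular the ``mass'' version $\E[|W|] = \tfrac{1}{\ell}\sum_i \E[|\opt_i^S|]$.

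The main obstacle is upgrading this pointwise membership identity to the nonlinear functional $f$, which is neither additive nor submodular (Remark~\ref{rem:not-submodular}). I would decompose $f_A = \sum_j f_{A_j}$ over columns of $A$, reducing to showing $\E[\|\Pi_W A_j\|_2^2] \ge \E[\|\Pi_{\opt_1^S} A_j\|_2^2]$ for each column. Both $W$ and $\opt_1^S$ are random subsets of the fixed well-conditioned set $\opt$, and I would exploit $\sigma_{\min}(\opt) > 0$ to expand each projection in $\opt$-coordinates and argue, via a symmetric coupling, that $W$ stochastically dominates $\opt_1^S$ in its coverage of every vector $a$. This linear-algebraic coupling step is where the argument genuinely departs from the submodular analog of Mirrokni--Zadimoghaddam, and I expect it to be the hardest part of the proof.
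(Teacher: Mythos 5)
Your step (a) is fine (it is the same use of Theorem~\ref{thm:greedy-main} against an arbitrary reference set that the paper makes in Lemma~\ref{lem:opt-ns}, and $|W|\le k$, $\sigma_{\min}(W)\ge\sigma_{\min}(\opt)$ indeed give $f(S)\ge f(W)/2$ within $k''$ steps), and your placement-symmetry observation is exactly the paper's equation~\eqref{eq:n14}. The genuine gap is step (b): the inequality $\E[f(W)]\ge\frac{1}{\ell}\sum_i\E[f(\opt_i^S)]$ is not proved, and I do not believe it can be pushed through by the per-column ``stochastic domination'' you sketch. The membership identity only pins down the \emph{marginal} probability $\Pr[x\in W\mid U]$; it says nothing about the joint distribution over pairs of elements of $\opt$, and $f_{A_j}$ is neither sub- nor super-additive, so first-moment equality does not transfer through $f$. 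Worse, the correlations naturally go the wrong way: membership of $x$ in $W$ is decided by the machine $x$ lands on, so two near-redundant elements $x,y\in\opt$ (each of which a machine selects only in the absence of the other) are \emph{both} dropped from $W$ precisely when they collide on a machine, while $\opt_1^S$ retains at least one of them with constant probability. If each of $x,y$ alone covers almost as much of a column $A_j$ as the pair does, then $\E[f_{A_j}(\opt_1^S)]$ exceeds $\E[f_{A_j}(W)]$, so no coupling can give the columnwise domination you want; at best the inequality could hold with an extra constant factor, and establishing even that would require a new argument you have not supplied.

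The paper avoids this obstacle entirely by never comparing $f(W)$ to $f(\opt_i^S)$. Instead it tracks the greedy trajectory $V^t$ on $\cup_i S_i$ and lower-bounds the \emph{expected one-step gain}: the gain at step $t$ is at least $\frac{1}{k}\sum_{x\in\opt} I_x\big(f(V^t\cup x)-f(V^t)\big)$, the indicators are handled by~\eqref{eq:n14}, and the inner sum $\sum_{x\in\opt_i^S}\big(f(V\cup x)-f(V)\big)$ is bounded below via the chain~\eqref{eq:start}--\eqref{eq:temp6} (the machinery behind Lemma~\ref{lem:large-gain}) by $\sigma_{\min}(\opt_i^S)\,\frac{(\max\{0,f(\opt_i^S)-f(V)\})^2}{4f(\opt_i^S)}$, after which the elementary bound $(\max\{0,a-b\})^2/a\ge a/2-2b/3$ linearizes everything. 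Because the expectation is then taken only of a \emph{sum of single-element marginal gains}, linearity of expectation suffices and the nonlinearity of $f$ never has to be exchanged with the expectation; a simple recursion over the $k''=3/\alpha$ steps (with $\alpha=\sigma_{\min}(\opt)/4k$) then yields $\E[f(S)]\ge Q$. To repair your proof you would need either to prove your step (b) inequality (which, per the above, is doubtful as stated) or to switch to this per-step marginal-gain accounting.
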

\begin{proof}
At a high level, the intuition behind the analysis is that many of the vectors in $\opt$ are selected in the first phase, i.e., in $\cup_i S_i$. For an  $x \in \opt$, let $I_x$ denote the indicator for $x \in \cup_i S_i$.

Suppose we have a partition $\{T_i\}$. Then if $x$ had gone to a machine $i$ for which $x \in \opt_i^{S}$, then by definition, $x$ will be in $S_i$. Now the key is to observe (see definitions) that the event $x \in \opt_i^S$ does not depend on where $x$ is in the partition! In particular, we could think of partitioning all the elements except $x$ (and at this point, we know if $x \in \opt_i^S$ for all $i$), and {\em then} randomly place $x$.  Thus
\begin{equation}\label{eq:n14}
  \E[ I_x ] = \E \left[ \frac{1}{\ell} \sum_{i=1}^{\ell} [[x \in \opt_i^S]] \right],
\end{equation}
where $[[~\cdot~]]$ denotes the indicator.

We now use this observation to analyze $f(S)$. Consider the execution of the greedy algorithm on $\cup_i S_i$, and suppose $V^t$ denotes the set of vectors picked at the $t$th step (so $V^t$ has $t$ vectors). The main idea is to give a lower bound on
\begin{equation}
\label{eq:diff-expectations}
 \E[ f(V^{t+1}) - f(V^t) ],
\end{equation}
where the expectation is over the partitioning $\{T_i\}$. Let us denote by $Q$ the RHS of \eqref{eq:main-lem-to-show}, for convenience.  Now, the trick is to show that for {\em any} $V^t$ such that $f(V^t) \le Q$, the expectation in~\eqref{eq:diff-expectations} is large. One lower bound on $f(V^{t+1}) - f(V^t)$ is (where $I_x$ is the indicator as above) 
\[\frac{1}{k} \sum_{x\in \opt} I_x \big( f(V^t \cup x) - f(V^t) \big).\]
Now for every $V$, we can use~\eqref{eq:n14} to obtain
\begin{align}
\E[ &f(V^{t+1}) - f(V^t) | V^t = V] \notag \\
&\ge \frac{1}{k\ell} \!\! \sum_{x \in\opt} \!\!\!\! \E \left[ \sum_{i=1}^\ell [[x \in \opt_i^S]]\right] \big( f(V \cup x) \!-\! f(V) \big) \notag \\
&= \frac{1}{k\ell} \E \left[ \sum_{i=1}^\ell \sum_{x \in \opt_i^S} \big( f(V \cup x) - f(V) \big) \right] \,. \notag
\end{align} 
  Now, using~\eqref{eq:start}-\eqref{eq:temp6}, we can bound the inner sum by
\[ \sigma_{\min}(\opt_i^S) \frac{ ( \max\{ 0, f(\opt_i^S) - f(V)\} )^2}{4f(\opt_i^S)} \,.\]
Now, we use $\sigma_{\min}(\opt_i^S) \ge \sigma_{\min}(\opt)$ and the identity that for any two nonnegative reals $a, b$: $(\max\{ 0, a-b \})^2/a \ge a/2 - 2b/3$.
Together, these imply
\begin{align*}
\E[ & f(V^{t+1}) - f(V^t) | V^t = V ] \\
&\ge \frac{\sigma_{\min}(\opt)}{4k\ell} \E\left[ \sum_{i=1}^\ell \frac{f(\opt_i^S)}{2} - \frac{2 f(V)}{3} \right].
\end{align*}
and consequently: 
$\E[  f(V^{t+1}) - f(V^t)] 
\ge  \alpha (Q - \frac{2}{3}\E[f(V^t)]$ 
for $\alpha = \sigma_{\min}(\opt)/4k$. 
If for some $t$, we have $\E[f(V^t)] \geq Q$, the proof is complete because $f$ is monotone, and $V^t \subseteq S$. 
Otherwise,  $\E[f(V^{t+1}) - f(V^t)]$ is at least $\alpha Q/3$ for each of the $k'' = 12 k/\sigma_{\min}(OPT) = 3/\alpha$ values of $t$. We conclude that $\E[f(S)]$ should be at least $(\alpha Q/3) \times (3/\alpha) = Q$ which completes the proof. 
\end{proof}
%
\begin{proof}[Proof of Theorem~\ref{thm:distributed-main}]
If $f_A( \opt_i^{NS}) \ge \frac{ f(\opt)}{4\kappa(\opt)}$ for some $i$, then we are done, because Lemma~\ref{lem:opt-ns} implies that $f_A(S_i)$ is large enough.  Otherwise, by Lemma~\ref{lem:additivity}, $f_A(\opt_i^{S}) \ge \frac{ f(\opt)}{4\kappa(\opt)}$ for all $i$. 
Now we can use Lemma~\ref{lem:opt-s} to conclude that $\E[ f_A(S) ] \ge \frac{ f(\opt)}{8\kappa(\opt)}$, completing the proof.
\end{proof}

\subsection{Multi-round algorithm}
We now show that repeating the above algorithm helps achieve a $(1-\eps)$-factor approximation.

We propose a framework with $r$ epochs for some integer $r>0$. In each epoch $t \in [r]$, we run the $\distgreedy$ algorithm to select set $S^t$. The only thing that changes in different epochs is the objective function: in epoch $t$, the algorithm selects columns based on the function $f^t$ which is defined to be: $f^t(V) = f_A(V \cup S^1 \cup S^2 \cdots \cup S^{t-1})$ for any $t$. We note that function $f^1$ is indeed the same as $f_A$. The final solution is the union of solutions: $\cup_{t=1}^r S^t$.

\begin{thm}\label{thm:core-set-2}
For any $\eps <1$, the expected value of the solution of the $r$-epoch $\distgreedy$ algorithm, for $r = O(\kappa(\opt)/\epsilon)$, is at least $(1-\eps)f(OPT)$.
\end{thm}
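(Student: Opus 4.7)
The plan is to track the ``gap'' $\Delta_t := f_A(\opt) - f_A(W^t)$, where $W^t := \bigcup_{s \le t} S^s$, and show it contracts geometrically by a factor $(1 - 1/(8\kappa(\opt)))$ per epoch. Concretely, I would establish the one-step inequality
\[
\E[\Delta_t \mid W^{t-1}] \;\le\; \Bigl(1 - \tfrac{1}{8\kappa(\opt)}\Bigr) \, \Delta_{t-1}.
\]
Iterating and using $(1-x)^r \le e^{-rx}$ then yields $\E[\Delta_r] \le e^{-r/(8\kappa(\opt))} f_A(\opt)$. Since $\ln(1/\eps) \le 1/\eps$ for $\eps \in (0,1)$, picking $r = 8\kappa(\opt)\ln(1/\eps) = O(\kappa(\opt)/\eps)$ gives $\E[\Delta_r] \le \eps f_A(\opt)$, which is equivalent to $\E[f_A(\cup_{t=1}^r S^t)] \ge (1-\eps) f_A(\opt)$.

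The one-step inequality will follow from a ``gain version'' of Theorem~\ref{thm:distributed-main} applied in epoch $t$: writing $R^{t-1} := f_A(W^{t-1})$, I would show
\[
\E\bigl[f^t(S^t) - R^{t-1} \bigm| W^{t-1}\bigr] \;\ge\; \frac{f^t(\opt) - R^{t-1}}{8\kappa(\opt)}.
\]
Because $f^t(\opt) = f_A(\opt \cup W^{t-1}) \ge f_A(\opt)$, the right-hand side is at least $\Delta_{t-1}/(8\kappa(\opt))$, and since $\Delta_t = \Delta_{t-1} - (\text{epoch-}t\text{ gain})$, the one-step bound follows by taking expectations.

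To establish this gain version, I would re-run the proofs of Lemmas~\ref{lem:opt-ns}, \ref{lem:additivity}, and \ref{lem:opt-s} with two modifications: replace $f_A$ by $f^t$ throughout (so absolute values turn into gains over $R^{t-1}$), and retain the \emph{original} $\opt$ as the reference set, rather than the maximizer of $f^t$. The partition $\opt = \opt_i^{S,t} \sqcup \opt_i^{NS,t}$ is defined via greedy on $f^t$, and both pieces sit inside $\opt$, so $\sigma_{\min}$ of either piece is $\ge \sigma_{\min}(\opt)$. The crucial single-vector marginal-gain inequality (Lemma~\ref{lem:one-vector}) is already stated in ``gap'' form---its right-hand side depends only on $f_u(S)-f_u(T)$---so applying it with $T \supseteq W^{t-1}$ produces the per-iteration progress we need, and the proofs of Lemmas~\ref{lem:opt-ns} and~\ref{lem:opt-s} transfer almost verbatim (with every appearance of $f$ interpreted as $f^t - R^{t-1}$).

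The main obstacle is the gain form of Lemma~\ref{lem:additivity}: for any partition $(I,J)$ of the original $\opt$, one must show
\[
\bigl(f^t(I)-R^{t-1}\bigr) + \bigl(f^t(J)-R^{t-1}\bigr) \;\ge\; \frac{f^t(\opt)-R^{t-1}}{2\kappa(\opt)},
\]
with the condition number of the \emph{original} $\opt$ rather than of its projection onto the complement of $W^{t-1}$. The existing proof of Lemma~\ref{lem:additivity} writes a maximizer $x \in \Span(\opt)$ as $\sum_i \alpha_i v_i$ with $\sum \alpha_i^2 \le 1/\sigma_{\min}(\opt)$ and splits $x$ according to $(I,J)$. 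I expect the same decomposition to go through after substituting each row of $A$ by its component orthogonal to $W^{t-1}$, since the coefficients $\alpha_i$ are expressed with respect to the unit columns of $\opt$ (not their projections) and thus the $\sigma_{\min}(\opt)$ and $\sigma_{\max}(\opt)$ bounds persist. Confirming that $\kappa$ does not degrade from epoch to epoch is the one genuinely new technical step in the argument.
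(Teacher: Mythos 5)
Your reduction of the theorem to a per-epoch progress bound is the right move, and you correctly identify that the benchmark must remain the original $\opt$ (so that $\sigma_{\min}$ and $\sigma_{\max}$ do not degrade across epochs). But the central one-step inequality $\E[\Delta_t \mid W^{t-1}] \le (1 - \tfrac{1}{8\kappa})\Delta_{t-1}$ --- i.e., a per-epoch gain that is \emph{linear} in the gap $\Delta_{t-1} = f_A(\opt) - f_A(W^{t-1})$ --- is not something the machinery delivers, and it is in fact false in general. The progress engine, Lemma~\ref{lem:one-vector} (and hence the inner-sum bound \eqref{eq:start}--\eqref{eq:temp6} used inside Lemma~\ref{lem:opt-s}), gives a gain that is \emph{quadratic} in the gap: $\sigma_{\min}(S)\,(f(S)-f(T))^2/(4f(S))$. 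In the first round this quadratic loss is linearized via $(\max\{0,a-b\})^2/a \ge a/2 - 2b/3$, which works only because $V$ starts empty; in epoch $t$ one has $b = f(V) \ge f(W^{t-1})$, which can exceed $f_A(\opt_i^S)$, making that bound vacuous. Trying to repair this by replacing the benchmark $f_A(\opt_i^S)$ with $f^t(\opt_i^S)$ forces you to apply Lemma~\ref{lem:one-vector} with $S = \opt_i^S \cup W^{t-1}$, whose $\sigma_{\min}$ is uncontrolled --- this is exactly the issue you flag at the end but only for Lemma~\ref{lem:additivity}; it infects the core progress bound as well. The tightness example of Appendix~\ref{app:tight-ex} shows the quadratic behavior is real: with $\ell = 1$ the multi-epoch algorithm degenerates to running \greedy{} longer, your contraction would give a $(1-\eps)$ guarantee after $O(\kappa\log(1/\eps)/\sigma_{\min})$ columns, and for fixed $\theta$ and small $\eps$ this contradicts the $\Omega(1/(\theta^2\eps))$ lower bound there.

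The paper's route accepts the quadratic loss. It observes that epoch $t$ is a fresh one-round instance on the residual matrix $A'$ (the columns of $A$ with their components along $C^{t} = W^{t}$ removed), and proves the key inequality $f_{A'}(\opt) \ge \bigl(\sqrt{f_A(\opt)} - \sqrt{f_A(C^t)}\bigr)^2$ via the calculation in \eqref{eq:dotprod-lb2}. Theorem~\ref{thm:distributed-main} applied verbatim to $(A',B)$ then gives a per-epoch gain of at least $f_{A'}(\opt)/(8\kappa) \ge \Delta_{t-1}^2/(32\kappa\, f_A(\opt))$, and the gap-halving (doubling) argument from the proof of Theorem~\ref{thm:greedy-main} --- not geometric decay --- converts this into $r = O(\kappa/\eps)$ epochs. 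If you replace your contraction step with this residual-optimum bound and rerun your own phase accounting, your argument goes through and matches the theorem.
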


The proof is provided in Section~\ref{app:core-set-2} of the appendix.

{\em Necessity of Random Partitioning.}  We point out that the random partitioning step of our algorithm is crucial for the $\gcss(A, B, k)$ problem.  We adapt the instance from~\cite{VahabPODS2014} and show that even if each machine can compute $f_A(\cdot)$ exactly, and is allowed to output $\text{poly}(k)$ columns, it cannot compete with the optimum. Intuitively, this is because the partition of the columns in $B$ could ensure that in each partition $i$, the best way of covering $A$ involve picking some vectors $S_i$, but the $S_i$'s for different $i$ could overlap heavily, while the global optimum should use different $i$ to capture different {\em parts} of the space to be covered. (See Theorem \ref{thm:rand-part} in Appendix~\ref{app:rand-part} for details.)

\section{Further optimizations for \greedy} \label{section-5}
We now elaborate on some of the techniques discussed in Section~\ref{section-2} for improving the running time of $\greedy$. 
We first assume that we left-multiply both $A$ and $B$ by a random Gaussian matrix of dimension $r \times m$, for $r \approx k \log n/\eps^2$. Working with the new instance suffices for the purposes of $(1-\eps)$ approximation to CSS (for picking $O(k)$ columns). (Details in the Appendix, Section~\ref{app:random-projections})

\subsection{Projection-Cost Preserving Sketches}
Marginal gain evaluations of the form $f_A(S \cup v) - f_A(S)$ require summing the marginal gain of $v$ onto each column of $A$. When $A$ has a large number of columns, this can be very expensive. To deal with this, we use a {\em sketch} of $A$ instead of $A$ itself. This idea has been explored in several recent works; we use the following notation and result: 

\begin{defin}[\cite{Cohen}] \label{defin:pcps} For a matrix $A \in \Real^{m \times n}$, 
$A' \in \Real^{m \times n'}$ is a \emph{rank-$k$ Projection-Cost Preserving Sketch (PCPS)} with error $0 \leq \eps < 1$ if for any set of $k$ vectors $S$, we have: $(1 - \eps) f_A(S) \leq f_{A'}(S) + c \leq (1 + \eps) f_A(S)$
where $c \geq 0$ is a constant that may depend on $A$ and $A'$ but is independent of $S$.
\end{defin}

\begin{thm}\label{thm:pcps}[Theorem 12 of \cite{Cohen}]
Let $R$ be a random matrix with $n$ rows and $n' = O(\frac{k + \log\frac{1}{\delta}}{\eps^2})$ columns, where each entry is set independently and uniformly to $\plusminus \sqrt{\frac{1}{n'}}$. Then for any matrix $A \in \Real^{m \times n}$, with probability at least $1 - O(\delta)$, $AR$ is a rank-$k$ PCPS for $A$.
\end{thm}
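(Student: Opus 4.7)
The plan is to derive this PCPS guarantee from two standard properties of the random sign matrix $R$: an oblivious \emph{subspace embedding} property and an \emph{approximate matrix multiplication} (AMM) guarantee. Both are known to hold at the stated dimension $n' = O((k + \log(1/\delta))/\eps^2)$ for matrices with i.i.d.\ entries $\pm 1/\sqrt{n'}$, and they combine via the general reduction of \cite{Cohen} to yield PCPS; in the paper itself, the theorem is simply invoked as a black box from that reference.

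First, I would decompose $A$ by its SVD as $A = A_k + E$, where $A_k$ is the best rank-$k$ approximation and $E$ is the tail. For an arbitrary set $S$ of at most $k$ vectors with projector $\Pi_S$, the PCPS property is equivalent to showing that $\|\Pi_S AR\|_F^2 - \|\Pi_S A\|_F^2$ differs from a fixed $S$-independent constant (which becomes $-c$) by at most $\eps\, \|\Pi_S A\|_F^2$. Using the identity $\|\Pi_S A\|_F^2 = \|A\|_F^2 - \|(I-\Pi_S)A\|_F^2$, it suffices to control $\|(I-\Pi_S)A\|_F^2$ versus $\|(I-\Pi_S)AR\|_F^2$, since the $S$-independent piece $\|AR\|_F^2 - \|A\|_F^2$ can be absorbed into $c$. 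Expanding $(I-\Pi_S)A = (I-\Pi_S)A_k + (I-\Pi_S)E$ and similarly with $R$ inserted then splits the task into a head term, a tail term, and a cross-term, each to be controlled uniformly over $S$.

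Next, I would verify the two sketching properties. For the subspace embedding, a net argument on the unit sphere of a $k$-dimensional subspace combined with a Hanson--Wright-type concentration inequality for sub-Gaussian quadratic forms shows that, with probability at least $1 - O(\delta)$, $\|A_k R x\|_2 = (1 \pm \eps)\|A_k x\|_2$ uniformly over $x$ in the column span of $A_k^T$; the dimension $n' = \Omega((k + \log(1/\delta))/\eps^2)$ is tight for this. For AMM, second-moment computations on the sign matrix yield $\|M R R^T N^T - M N^T\|_F \le \eps \,\|M\|_F \|N\|_F$ with probability $1 - O(\delta)$ for any fixed $M, N$. Combining these: the head contribution $\|(I-\Pi_S)A_k R\|_F^2$ equals $(1 \pm \eps)\|(I-\Pi_S)A_k\|_F^2$ uniformly over $S$, because $(I-\Pi_S)A_k$ always lies in the fixed column span of $A_k$ where the subspace embedding applies; the pure tail term $\|ER\|_F^2 - \|E\|_F^2$ is controlled by AMM on $E$ and is $S$-independent, so it folds into $c$; and the remaining $S$-dependent tail and cross-terms reduce to AMM statements between the fixed matrix $E$ and matrices lying in the fixed column span of $A_k$.

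The main obstacle is attaining \emph{uniform} control over all rank-$k$ projectors $\Pi_S$ without incurring a $\log n$ union-bound penalty, which would blow up the dimension. The critical trick that avoids this is the head/tail split: the only place a genuine subspace embedding is invoked is on the single fixed $k$-dimensional column span of $A_k$, while every other $S$-dependent term collapses to a fixed-matrix AMM statement. The remaining technical overhead is the bookkeeping needed to consolidate the multiplicative $(1 \pm \eps)$ errors across the three pieces, and to verify that all $S$-independent slack folds into a single nonnegative additive constant $c$ as required by Definition~\ref{defin:pcps}.
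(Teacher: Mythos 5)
The paper offers no proof of this statement: it is imported verbatim as Theorem 12 of \cite{Cohen}, and your own remark that it is ``invoked as a black box'' is accurate. Your outline is, in substance, a faithful reconstruction of the argument in that reference --- the head/tail SVD split, an oblivious subspace embedding applied only to the single fixed low-rank row span, approximate matrix multiplication for the cross and tail terms, and the observation that all $S$-independent slack folds into the additive constant $c$. Two caveats if you were to carry this out in full. First, the actual argument in \cite{Cohen} splits $A$ at rank $2k$ (more generally $O(k)$), not rank $k$: with only $A_k$ as the head, the tail $E$ can be rank one with $\|E\|_2^2 = \|A - A_k\|_F^2$, and the spectral-norm error terms arising from the cross-term AMM bounds can then not be charged against the Frobenius budget $\eps\,\|\Pi_S A\|_F^2$ uniformly over $S$; the $2k$ split guarantees $\|E\|_2^2 \le \|A-A_k\|_F^2/k$, which is what makes that accounting close. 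Second, a small slip: the reason $\|(I-\Pi_S)A_k R\|_F^2$ is controlled uniformly is that the \emph{rows} of $(I-\Pi_S)A_k$ lie in the fixed row span of $A_k$ (left multiplication preserves row-span containment), not that its columns lie in the column span of $A_k$, which is false in general; your earlier sentence about the column span of $A_k^T$ has it right, so this is only a wording issue.
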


Thus, we can use PCPS to sketch the matrix $A$ to have roughly $k/\eps^2$ columns, and use it to compute $f_A(S)$ to a $(1\pm \eps)$ accuracy for any $S$ of size $\le k$.  This is also used in our distributed algorithm, where we send the sketch to every machine.

\subsection{Lazier-than-lazy Greedy}
The natural implementation of $\greedy$ requires $O(nk)$ evaluations of $f(\cdot)$ since we compute the marginal gain of all $n$ candidate columns in each of the $k$ iterations. For submodular functions, one can do better: the recently proposed $\ltlgreedy$ algorithm obtains a  $1 - \frac{1}{e} - \delta$ approximation with only a linear number $O(n \log (1/\delta))$ of marginal gain evaluations \cite{Mirzasoleiman}. We show that a similar result holds for \gcss, even though our cost function $f(\cdot)$ is not submodular. 

The idea is as follows. Let $T$ be the current solution set. To find the next element to add to $T$, we draw a sized $\frac{n_B \log (1/\delta)}{k} $ subset uniformly at random from the columns in $B \setminus T$. We then take from this set the column with largest marginal gain, add it to $T$, and repeat. We show this gives the following guarantee (details in Appendix Section~\ref{app:thm-lazier-than-lazy}.)

\begin{thm} \label{thm-lazier-than-lazy:main}
Let $A\in \Real^{m \times n_A}$ and $B \in \Real^{m \times n_B}$. Let $OPT_k$ be the set of columns from $B$ that maximizes $f_A(S)$ subject to $|S| = k$. Let $\eps, \delta > 0$ be any constants such that $\epsilon + \delta \leq 1$. Let $T_r$ be the set of columns output by $\ltlgreedy (A, B, r)$, for $r = \frac{16k}{\eps \sigma_{\min}(OPT_k)}$. Then we have:
$$\E[f_A(T_r)] \geq (1 - \eps - \delta) f_A(OPT_k)$$
Further, this algorithm evaluates marginal gain only a linear number $\frac{16n_B \log (1/\delta)}{\eps \sigma_{\min}(OPT_k)}$ of times.
\end{thm}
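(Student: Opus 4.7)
The plan is to reuse the geometric-halving bookkeeping from the proof of Theorem~\ref{thm:greedy-main} essentially verbatim, but replace the deterministic per-step bound of Lemma~\ref{lem:large-gain} with an expected per-step bound that is weaker by exactly a $(1-\delta)$ factor. The query count is then $r$ iterations times $s = n_B \log(1/\delta)/k$ samples per iteration, matching the claim directly.

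\textbf{Step 1: a sum bound over $OPT_k$.} The chain \eqref{eq:start}--\eqref{eq:temp8} in the proof of Lemma~\ref{lem:large-gain} actually establishes
$$\sum_{v \in OPT_k} \bigl(f_A(T \cup v) - f_A(T)\bigr) \;\geq\; \sigma_{\min}(OPT_k)\cdot \frac{(f_A(OPT_k) - f_A(T))^2}{4\,f_A(OPT_k)}.$$
Elements of $OPT_k \cap T$ contribute $0$ to the sum, so the mean marginal gain over $OPT_k \setminus T$ (a set of size $k' := |OPT_k \setminus T| \leq k$) is at least the right-hand side divided by $k'$.

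\textbf{Step 2: random sampling matches the mean up to $1-\delta$.} A uniform size-$s$ sample $R \subseteq B \setminus T$ with $s = n_B \log(1/\delta)/k$ misses $OPT_k \setminus T$ with probability at most $(1 - k'/n_B)^s \leq \delta^{k'/k}$. Conditional on the hit event, a uniformly random element of $R \cap OPT_k$ is, by the permutation symmetry of uniform sampling on $B \setminus T$, uniformly distributed over $OPT_k \setminus T$, and the max-gain element of $R$ dominates that random witness. This gives $\E[\max_{v \in R} \Delta_v] \geq (1 - \delta^{k'/k}) \cdot (1/k')\sum_{v \in OPT_k \setminus T} \Delta_v$. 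The elementary inequality $1 - \delta^x \geq x(1-\delta)$ on $x \in [0,1]$ (immediate from the concavity of $1 - \delta^x$ and equality at $x=0,1$) cancels the $k'$ factor precisely, and together with Step~1 yields the unconditional bound
$$\E\bigl[f_A(T_{t+1}) - f_A(T_t)\,\big|\,T_t\bigr] \;\geq\; (1-\delta)\cdot\sigma_{\min}(OPT_k)\cdot\frac{(f_A(OPT_k) - f_A(T_t))^2}{4k\,f_A(OPT_k)}.$$

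\textbf{Step 3: halving on the expected gap, and the main obstacle.} Let $g_t = \E[f_A(OPT_k) - f_A(T_t)]$. Taking outer expectation in Step~2 and using $\E[X^2] \geq (\E[X])^2$ gives $g_{t+1} \leq g_t - (1-\delta)\sigma_{\min}(OPT_k)\,g_t^2/(4k\,f_A(OPT_k))$. Replaying the geometric halving from the proof of Theorem~\ref{thm:greedy-main} line-by-line on $g_t$ shows that $r = 16k/(\eps\sigma_{\min}(OPT_k))$ iterations drive $g_r$ down to $(\eps/(1-\delta))\cdot f_A(OPT_k)$, and the hypothesis $\eps + \delta \leq 1$ is exactly what makes $\eps/(1-\delta) \leq \eps + \delta$, proving the approximation bound; the query count $rs = 16 n_B \log(1/\delta)/(\eps \sigma_{\min}(OPT_k))$ is then immediate. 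The main obstacle of the proof is Step~2 in the small-$k'$ regime: the naive ``$R$ hits $OPT_k$ with probability $\geq 1-\delta$'' argument only works at $k' = k$ and degrades as $OPT_k$ gets absorbed into $T$. The saving observation is that as $k'$ shrinks the per-element mean gain over $OPT_k \setminus T$ grows by the same factor $k/k'$ (Step~1's sum is unchanged), and the concavity inequality $1 - \delta^{k'/k} \geq (k'/k)(1-\delta)$ cancels the two effects \emph{exactly}, leaving the clean unconditional per-iteration bound that feeds into Step~3.
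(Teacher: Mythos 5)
Your proposal is correct and follows essentially the same route as the paper: an expected per-iteration gain lemma obtained from the hit-probability bound $1-\delta^{k'/k}\geq (k'/k)(1-\delta)$ combined with the summed bound \eqref{eq:start}--\eqref{eq:temp8} over $OPT_k$ (this is exactly the paper's Lemma~\ref{lem:large-gain-lazier-than-lazy}, following \cite{Mirzasoleiman}), followed by the same geometric-halving argument as Theorem~\ref{thm:greedy-main} with threshold $\frac{\eps}{1-\delta}F$ and the observation that $\eps+\delta\leq 1$ gives $\frac{\eps}{1-\delta}\leq\eps+\delta$. Your only refinement is making the expectation bookkeeping explicit (the $g_t$ recursion via Jensen), which the paper treats informally.
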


Note that this guarantee is nearly identical to our analysis of $\greedy$ in Theorem \ref{thm:greedy-main}, except that it is in expectation. 
The proof strategy is very similar to that of Theorem \ref{thm:greedy-main}, namely showing that the value of $f(\cdot)$ increases significantly in every iteration (see Appendix Section~\ref{app:thm-lazier-than-lazy}).

{\bf Calculating marginal gain faster.} We defer the discussion to Appendix Section~\ref{sec:app:marginal}. 

\section{Experimental results}\label{section-6}

\begin{figure*}[t]
\centering
\includegraphics[scale=0.32]{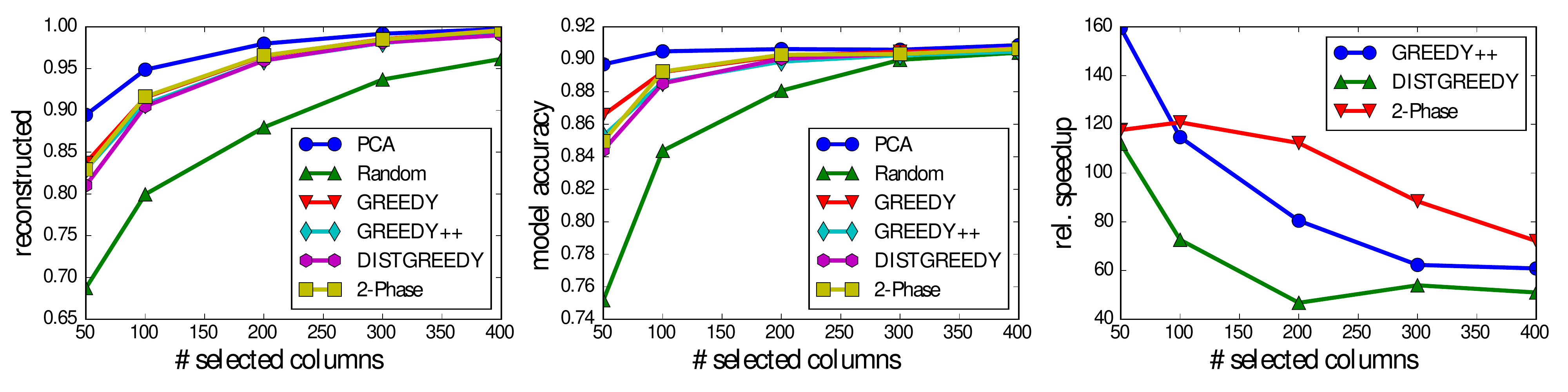}
\caption{A comparison of reconstruction accuracy, model classification
accuracy and runtime of various column selection methods (with PCA
proved as an upper bound). The runtime is shown plot shows the
relative speedup over the naive GREEDY algorithm.}
\label{fig}
\end{figure*}

In this section we present an 
empirical investigation of
the GREEDY, GREEDY++ and \distgreedy\ algorithms.
Additionally, we will compare with several baselines:
    \\
    {\bf Random:}
    The simplest imaginable baseline, this method selects
    columns randomly. \\
    {\bf 2-Phase:}
    The two-phased algorithm of \cite{Boutsidis2}, which
operates by first sampling $\Theta(k \log k)$ columns based on
properties of the top-$k$ right singular space of the input matrix
(this requires computing a top-$k$ SVD), then finally selects exactly
$k$ columns via a deterministic procedure. The overall complexity is
dominated by the top-$k$ SVD, which is $O( \min\{mn^2, m^2n\})$. \\
    {\bf PCA:}
    The columns of the rank-$k$ PCA projection matrix will be
used to serve as an upper bound on performance, as they explicitly
minimize the Forbenius reconstruction criteria. Note this method only
serves as an upper bound and does not fall into  the framework of
column subset selection.

We investigate using these algorithms using two datasets, one with a small
set of columns (mnist) that is used to compare both scalable and
non-scalable methods, as well as a sparse dataset with a large number of
columns (news20.binary) that is meant to demonstrate the scalability of the
GREEDY core-set algorithm.\footnote{Both datasets can be
found at:
www.csie.ntu.edu.tw/$\sim$cjlin/libsvmtools/datasets/multiclass.html.}

Finally, we are also interested in the effect of column selection as a
preprocessing step for supervised learning methods. To that end, we will train
a linear SVM model, using the LIBLINEAR library \citep{fan2008}, with the
subselected columns (features) and measure the effectiveness of the model on a
held out test set. For both datasets we report test error for the best
choice of regularization parameter $c \in \{10^{-3}, \ldots, 10^4\}$.
We run GREEDY++ and \distgreedy\ with $\frac{n}{k}\log(10)$
marginal gain evaluations per iteration and the distributed
algorithm uses $s = \sqrt{\frac{n}{k}}$ machines with each machine
recieving $\frac{n}{s}$ columns.

\subsection{Small scale dataset (mnist)}
We first consider the MNIST digit recognition task, which is a
ten-class classification problem. There are $n$ = 784 input features
(columns) that represent pixel values from the $28\times28$-pixel
images. We use $m$ = 60,000 instances to train with and 10,000 instances
for our test set.

From Figure~\ref{fig} we see that all column sampling methods, apart
from Random, select columns that approximately provide the same amount
of reconstruction and are able to reach within 1\% of the performance
of PCA after sampling 300 columns. We also see a very similar trend
with respect to classification accuracy. It is notable that, in
practice, the core-set version of GREEDY incurs almost no additional
error (apart from at the smallest values of $k$) when compared to the
standard GREEDY algorithm.

Finally, we also show the relative speed up of the competitive methods
over the standard GREEDY algorithm. In this small dataset regime, we
see that the core-set algorithm does not offer an improvement over the single
machine GREEDY++ and in fact the 2-Phase algorithm is the fastest.
This is primarily due to the overhead of the distributed core-set
algorithm and the fact that it requires two greedy selection stages
(e.g.\ map and reduce). Next, we will consider a dataset that is large enough that a distributed model is in fact necessary. 

\subsection{Large scale dataset (news20.binary)}

\begin{table}
\begin{small}
\begin{center}
  \begin{tabular}{|c|c|c|c|c|}
    \hline
    \bf n & \bf Rand & \bf 2-Phase & \bf $\distgreedy$ &  \bf PCA \\
    \hline
    \hline
    500 & 54.9 & 81.8 (1.0) & 80.2 (72.3) & 85.8 (1.3) \\
    \hline
    1000 & 59.2 & 84.4 (1.0) & 82.9 (16.4) & 88.6 (1.4)\\
    \hline
    2500 & 67.6 & 87.9 (1.0) & 85.5 (2.4) & 90.6 (1.7) \\
    \hline
  \end{tabular}
\end{center}
\vspace{-0.1cm}
\end{small}
  \caption{A comparison of the classification accuracy of selected features. Also, the relative speedup over the 2-Phase algorithm for selecting features is shown in parentheses.
\vspace{-0.4cm}
  }
\label{table}
\end{table}

In this section, we show that the $\distgreedy$ algorithm can indeed
scale to a dataset with a large number of columns.
The news20.binary dataset is a binary class text classification
problem, where we start with $n$ = 100,000 sparse features (0.033\%
non-zero entries) that represent text trigrams, use $m$ = 14,996
examples to train with and hold-out 5,000 examples to test with.

We compare the classification accuracy and column selection runtime of
the naive random method, 2-Phase algorithm
as well as PCA (that serves as an upper bound on performance)
to the $\distgreedy$ algorithm.
%
The results are presented in Table~\ref{table}, which shows that
$\distgreedy$ and 2-Phase both perform significantly better than random sampling and
come relatively close to the PCA upper bound in terms of accuracy. However,
we also find that $\distgreedy$ can be magnitudes of order faster than the 2-Phase
algorithm. This is in a large part because the 2-Phase algorithm suffers from the bottleneck of
computing a top-$k$ SVD. We note that an approximate SVD method
could be used instead,
however, it was outside the scope of this preliminary empirical investigation.

In conclusion, we have demonstrated that \distgreedy\ is able to scale to
larger sized datasets while still selecting effective features.

\bibliographystyle{icml2016}
\bibliography{icml_main}

\begin{thebibliography}{32}
\providecommand{\natexlab}[1]{#1}
\providecommand{\url}[1]{\texttt{#1}}
\expandafter\ifx\csname urlstyle\endcsname\relax
  \providecommand{\doi}[1]{doi: #1}\else
  \providecommand{\doi}{doi: \begingroup \urlstyle{rm}\Url}\fi

\bibitem[Balcan et~al.(2013)Balcan, Ehrlich, and Liang]{BalcanEL13}
Balcan, Maria{-}Florina, Ehrlich, Steven, and Liang, Yingyu.
\newblock Distributed k-means and k-median clustering on general communication
  topologies.
\newblock In \emph{Advances in Neural Information Processing Systems 26: 27th
  Annual Conference on Neural Information Processing Systems 2013. Proceedings
  of a meeting held December 5-8, 2013, Lake Tahoe, Nevada, United States.},
  pp.\  1995--2003, 2013.

\bibitem[Boutsidis et~al.(2009)Boutsidis, Mahoney, and Drineas]{Boutsidis2}
Boutsidis, C., Mahoney, M.~W., and Drineas, P.
\newblock An improved approximation algorithm for the column subset selection
  problem.
\newblock In \emph{Proceedings of the Twentieth Annual ACM-SIAM Symposium on
  Discrete Algorithms}, SODA '09, pp.\  968--977, Philadelphia, PA, USA, 2009.
  Society for Industrial and Applied Mathematics.

\bibitem[Boutsidis et~al.(2011)Boutsidis, Drineas, and
  Magdon-Ismail]{Boutsidis1}
Boutsidis, C., Drineas, P., and Magdon-Ismail, M.
\newblock Near optimal column-based matrix reconstruction.
\newblock In \emph{Proceedings of the 2011 IEEE 52Nd Annual Symposium on
  Foundations of Computer Science}, FOCS '11, pp.\  305--314, Washington, DC,
  USA, 2011. IEEE Computer Society.
\newblock ISBN 978-0-7695-4571-4.

\bibitem[Boutsidis et~al.(2015)Boutsidis, Liberty, and Sviridenko]{Liberty}
Boutsidis, C., Liberty, E., and Sviridenko, M.
\newblock Greedy minimization of weakly supermodular set functions.
\newblock \emph{CoRR}, abs/1502.06528, 2015.

\bibitem[Boutsidis et~al.(2016)Boutsidis, Woodruff, and Zhong]{Boutsidis2015}
Boutsidis, C., Woodruff, D.~P., and Zhong, P.
\newblock Communication-optimal distributed principal component analysis in the
  column-partition model.
\newblock STOC '16 (to appear). ACM, 2016.

\bibitem[\c{C}ivril(2014)]{Civril2}
\c{C}ivril, A.
\newblock Column subset selection problem is ug-hard.
\newblock \emph{J. Comput. Syst. Sci.}, 80\penalty0 (4):\penalty0 849--859,
  June 2014.
\newblock ISSN 0022-0000.

\bibitem[\c{C}ivril \& Magdon-Ismail(2012)\c{C}ivril and
  Magdon-Ismail]{Civril1}
\c{C}ivril, A. and Magdon-Ismail, M.
\newblock Column subset selection via sparse approximation of svd.
\newblock \emph{Theor. Comput. Sci.}, 421:\penalty0 1--14, March 2012.
\newblock ISSN 0304-3975.

\bibitem[Cohen et~al.(2015)Cohen, Elder, Musco, Musco, and Persu]{Cohen}
Cohen, M.~B., Elder, S., Musco, C., Musco, C., and Persu, M.
\newblock Dimensionality reduction for k-means clustering and low rank
  approximation.
\newblock In \emph{Proceedings of the Forty-Seventh Annual ACM on Symposium on
  Theory of Computing}, STOC '15, pp.\  163--172, New York, NY, USA, 2015. ACM.
\newblock ISBN 978-1-4503-3536-2.

\bibitem[da~Ponte~Barbosa et~al.(2015)da~Ponte~Barbosa, Ene, Nguyen, and
  Ward]{BarbosaENW15}
da~Ponte~Barbosa, Rafael, Ene, Alina, Nguyen, Huy~L., and Ward, Justin.
\newblock The power of randomization: Distributed submodular maximization on
  massive datasets.
\newblock \emph{CoRR}, abs/1502.02606, 2015.

\bibitem[Deshpande \& Rademacher(2010)Deshpande and Rademacher]{Deshpande4}
Deshpande, A. and Rademacher, L.
\newblock Efficient volume sampling for row/column subset selection.
\newblock In \emph{Proceedings of the 2010 IEEE 51st Annual Symposium on
  Foundations of Computer Science}, FOCS '10, pp.\  329--338, Washington, DC,
  USA, 2010. IEEE Computer Society.
\newblock ISBN 978-0-7695-4244-7.

\bibitem[Deshpande \& Vempala(2006)Deshpande and Vempala]{Deshpande1}
Deshpande, A. and Vempala, S.
\newblock Adaptive sampling and fast low-rank matrix approximation.
\newblock In \emph{Proceedings of the 9th International Conference on
  Approximation Algorithms for Combinatorial Optimization Problems, and 10th
  International Conference on Randomization and Computation},
  APPROX'06/RANDOM'06, pp.\  292--303, Berlin, Heidelberg, 2006.
  Springer-Verlag.
\newblock ISBN 3-540-38044-2, 978-3-540-38044-3.

\bibitem[Deshpande et~al.(2006)Deshpande, Rademacher, Vempala, and
  Wang]{Deshpande2}
Deshpande, A., Rademacher, L., Vempala, S., and Wang, G.
\newblock Matrix approximation and projective clustering via volume sampling.
\newblock In \emph{Proceedings of the Seventeenth Annual ACM-SIAM Symposium on
  Discrete Algorithm}, SODA '06, pp.\  1117--1126, Philadelphia, PA, USA, 2006.
  Society for Industrial and Applied Mathematics.
\newblock ISBN 0-89871-605-5.

\bibitem[Drineas et~al.(2004)Drineas, Frieze, Kannan, Vempala, and
  Vinay]{Drineas1}
Drineas, P., Frieze, A., Kannan, R., Vempala, S., and Vinay, V.
\newblock Clustering large graphs via the singular value decomposition.
\newblock \emph{Mach. Learn.}, 56\penalty0 (1-3):\penalty0 9--33, June 2004.
\newblock ISSN 0885-6125.

\bibitem[Drineas et~al.(2008)Drineas, Mahoney, and
  Muthukrishnan]{Drineas-Leverage}
Drineas, P., Mahoney, M.~W., and Muthukrishnan, S.
\newblock Relative-error cur matrix decompositions.
\newblock \emph{SIAM J. Matrix Analysis Applications}, 30\penalty0
  (2):\penalty0 844--881, 2008.

\bibitem[Fan et~al.(2008)Fan, Chang, Hsieh, Wang, and Lin]{fan2008}
Fan, Rong-En, Chang, Kai-Wei, Hsieh, Cho-Jui, Wang, Xiang-Rui, and Lin,
  Chih-Jen.
\newblock Liblinear: A library for large linear classification.
\newblock \emph{The Journal of Machine Learning Research}, 9:\penalty0
  1871--1874, 2008.

\bibitem[Farahat et~al.(2011)Farahat, Ghodsi, and Kamel]{Farahat1}
Farahat, A.~K., Ghodsi, A., and Kamel, M.~S.
\newblock An efficient greedy method for unsupervised feature selection.
\newblock In \emph{Proceedings of the 2011 IEEE 11th International Conference
  on Data Mining}, ICDM '11, pp.\  161--170. IEEE Computer Society, 2011.
\newblock ISBN 978-0-7695-4408-3.

\bibitem[Farahat et~al.(2013)Farahat, Ghodsi, and Kamel]{Farahat2}
Farahat, A.~K., Ghodsi, A., and Kamel, M.~S.
\newblock A fast greedy algorithm for generalized column subset selection.
\newblock In \emph{Advances in Neural Information Processing Systems 27}, 2013.

\bibitem[Farahat et~al.(2015{\natexlab{a}})Farahat, Elgohary, Ghodsi, and
  Kamel]{Farahat3}
Farahat, A.~K., Elgohary, A., Ghodsi, A., and Kamel, M.~S.
\newblock Greedy column subset selection for large-scale data sets.
\newblock \emph{Knowl. Inf. Syst.}, 45\penalty0 (1):\penalty0 1--34, October
  2015{\natexlab{a}}.
\newblock ISSN 0219-1377.

\bibitem[Farahat et~al.(2015{\natexlab{b}})Farahat, Elgohary, Ghodsi, and
  Kamel]{Farahat4}
Farahat, A.~K., Elgohary, A., Ghodsi, A., and Kamel, M.~S.
\newblock Greedy column subset selection for large-scale data sets.
\newblock \emph{Knowl. Inf. Syst.}, 45\penalty0 (1):\penalty0 1--34, October
  2015{\natexlab{b}}.
\newblock ISSN 0219-1377.

\bibitem[Feldman et~al.(2013)Feldman, Schmidt, and Sohler]{FeldmanSS13}
Feldman, D., Schmidt, M., and Sohler, C.
\newblock Turning big data into tiny data: Constant-size coresets for
  \emph{k}-means, {PCA} and projective clustering.
\newblock In \emph{Proceedings of the Twenty-Fourth Annual {ACM-SIAM} Symposium
  on Discrete Algorithms, {SODA} 2013, New Orleans, Louisiana, USA, January
  6-8, 2013}, pp.\  1434--1453, 2013.

\bibitem[Feldman et~al.(2015)Feldman, Volkov, and Rus]{Feldman}
Feldman, D., Volkov, M., and Rus, D.
\newblock Dimensionality reduction of massive sparse datasets using coresets.
\newblock \emph{CoRR}, abs/1503.01663, 2015.

\bibitem[Frieze et~al.(2004)Frieze, Kannan, and Vempala]{Frieze}
Frieze, A., Kannan, R., and Vempala, S.
\newblock Fast monte-carlo algorithms for finding low-rank approximations.
\newblock \emph{J. ACM}, 51\penalty0 (6):\penalty0 1025--1041, November 2004.
\newblock ISSN 0004-5411.

\bibitem[Guruswami \& Sinop(2012)Guruswami and Sinop]{Guruswami}
Guruswami, V. and Sinop, A.~K.
\newblock Optimal column-based low-rank matrix reconstruction.
\newblock In \emph{Proceedings of the Twenty-third Annual ACM-SIAM Symposium on
  Discrete Algorithms}, SODA '12, pp.\  1207--1214. SIAM, 2012.

\bibitem[Guyon \& Elisseeff(2003)Guyon and Elisseeff]{Guyon}
Guyon, I. and Elisseeff, A.
\newblock An introduction to variable and feature selection.
\newblock \emph{J. Mach. Learn. Res.}, 3:\penalty0 1157--1182, March 2003.
\newblock ISSN 1532-4435.

\bibitem[Indyk et~al.(2014)Indyk, Mahabadi, Mahdian, and
  Mirrokni]{VahabPODS2014}
Indyk, Piotr, Mahabadi, Sepideh, Mahdian, Mohammad, and Mirrokni, Vahab~S.
\newblock Composable core-sets for diversity and coverage maximization.
\newblock In \emph{Proceedings of the 33rd {ACM} {SIGMOD-SIGACT-SIGART}
  Symposium on Principles of Database Systems, PODS'14, Snowbird, UT, USA, June
  22-27, 2014}, pp.\  100--108, 2014.

\bibitem[Johnson \& Lindenstrauss(1984)Johnson and Lindenstrauss]{Johnson}
Johnson, W. and Lindenstrauss, J.
\newblock Extensions of {L}ipschitz mappings into a {H}ilbert space.
\newblock In \emph{Conference in modern analysis and probability (New Haven,
  Conn., 1982)}, volume~26 of \emph{Contemporary Mathematics}, pp.\  189--206.
  American Mathematical Society, 1984.

\bibitem[Mirrokni \& Zadimoghaddam(2015)Mirrokni and Zadimoghaddam]{Mirrokni}
Mirrokni, V. and Zadimoghaddam, M.
\newblock Randomized composable core-sets for distributed submodular
  maximization.
\newblock In \emph{Proceedings of the Forty-Seventh Annual ACM on Symposium on
  Theory of Computing}, STOC '15, pp.\  153--162, New York, NY, USA, 2015. ACM.
\newblock ISBN 978-1-4503-3536-2.

\bibitem[Mirzasoleiman et~al.(2015)Mirzasoleiman, Badanidiyuru, Karbasi,
  Vondr{\'{a}}k, and Krause]{Mirzasoleiman}
Mirzasoleiman, B., Badanidiyuru, A., Karbasi, A., Vondr{\'{a}}k, J., and
  Krause, A.
\newblock Lazier than lazy greedy.
\newblock In \emph{Proceedings of the Twenty-Ninth {AAAI} Conference on
  Artificial Intelligence, January 25-30, 2015, Austin, Texas, {USA.}}, pp.\
  1812--1818, 2015.

\bibitem[Pi et~al.(2013)Pi, Peng, Zhou, and Zhang]{Pi}
Pi, Y., Peng, H., Zhou, S., and Zhang, Z.
\newblock A scalable approach to column-based low-rank matrix approximation.
\newblock In \emph{Proceedings of the Twenty-Third International Joint
  Conference on Artificial Intelligence}, IJCAI '13, pp.\  1600--1606. AAAI
  Press, 2013.
\newblock ISBN 978-1-57735-633-2.

\bibitem[Rudelson \& Vershynin(2010)Rudelson and Vershynin]{Rudelson}
Rudelson, M. and Vershynin, R.
\newblock Non-asymptotic theory of random matrices: extreme singular values.
\newblock \emph{Proceedings of the International Congress of Mathematicians},
  III:\penalty0 1576--1602, 2010.

\bibitem[Sarlos(2006)]{Sarlos}
Sarlos, T.
\newblock Improved approximation algorithms for large matrices via random
  projections.
\newblock In \emph{Proceedings of the 47th Annual IEEE Symposium on Foundations
  of Computer Science}, FOCS '06, pp.\  143--152, Washington, DC, USA, 2006.
  IEEE Computer Society.
\newblock ISBN 0-7695-2720-5.

\bibitem[Vershynin(2010)]{Vershynin}
Vershynin, R.
\newblock Introduction to the non-asymptotic analysis of random matrices, 2010.

\end{thebibliography}

\clearpage
\appendix

\section{Appendix}

\subsection{Proof of Lemma~\ref{lem:opt-ns}}\label{app:opt-ns}
\begin{proof}
Let us fix some machine $i$. The main observation is that running greedy with $T_i$ is the same as running it with $T_i \cup \opt_i^{NS}$ (because by definition, the added elements are not chosen).  Applying Theorem~\ref{thm:greedy-main}\footnote{To be precise, Theorem~\ref{thm:greedy-main} is presented as comparing against the optimum set of $k$ columns. However, an identical argument (simply stop at the last line in the proof of Theorem~\ref{thm:greedy-main}) shows the same bounds for any (potentially non-optimal) set of $k$ columns. This is the version we use here.} with $B = T_i \cup \opt_i^{NS}$ and $\eps = \frac{1}{2}$, we have that for
$ k' \ge \frac{32 |\opt_i^{NS}|}{\sigma_{\min}(\opt_i^{NS})}$,
then $f_A(S_i) \geq \frac{f_A(\opt_i^{NS})}{2}$. Now since $\opt_i^{NS}$ is a subset of $\opt$, we have that $\opt_i^{NS}$ is of size at most $|\opt| = k$, and also $\sigma_{\min}(\opt_i^{NS}) \ge \sigma_{\min}(\opt)$. Thus the above bound certainly holds whenever $k' \ge \frac{32 k}{\sigma_{\min}(\opt)}$.
\end{proof}

\subsection{Proof of Lemma~\ref{lem:additivity}} \label{app:additivity}

\begin{proof} [Lemma~\ref{lem:additivity}]
As before, we will first prove the inequality for one column $u$ instead of $A$, and adding over the columns gives the result. Suppose $\opt = \{v_1, \dots, v_k\}$, and let us abuse notation slightly, and use $I, J$ to also denote subsets of indices that they correspond to. Now, by the definition of $f$, there exists an $x = \sum_i \alpha_i v_i$, such that $\norm{x}=1$, and $\iprod{x, u}^2 = f_u(\opt)$.

Let us write $x = x_I + x_J$, where $x_I = \sum_{i \in I} \alpha_i v_i$. Then,
\begin{align*}
\iprod{x, u}^2 &= (\iprod{x_I, u} + \iprod{x_J, u})^2 \le 2(\iprod{x_I, u}^2 + \iprod{x_J, u}^2 ) \\
&\le 2(\norm{x_I}^2 f_u(I) + \norm{x_J}^2 f_u(J) ) \\
&\le 2(\norm{x_I}^2 + \norm{x_J}^2) (f_u(I) + f_u(J)).
\end{align*}
Now, we have
\[ \norm{x_I}^2 \le \sigma_{\max}(I)(\sum_{i \in I} \alpha_i^2),\]
from the definition of $\sigma_{\max}$, and we clearly have $\sigma_{\max}(I)\le \sigma_{\max}(\opt)$, as $I$ is a subset. Using the same argument with $J$, we have
\[ \norm{x_I}^2 + \norm{x_J}^2 \le \sigma_{\max}(\opt) (\sum_i \alpha_i^2). \]
Now, since $\norm{x}=1$, the definition of $\sigma_{\min}$ gives us that $\sum_i \alpha_i^2 \le 1/\sigma_{\min}(\opt)$, thus completing the proof.
\end{proof}


\subsection{Tight example for the bound in Theorem~\ref{thm:greedy-main}} \label{app:tight-ex}

We show an example in which we have a collection of (nearly unit)
vectors such that:
\enum{
\item Two of them can exactly represent a target vector
$u$ (i.e., $k=2$). 
\item The $\sigma_{\min}$ of the matrix with these two vectors as
columns is $\sim \theta^2$, for some parameter $\theta <1$.
\item The greedy algorithm, to achieve an error $\le \epsilon$ in
the squared-length norm, will require at least $\frac{1}{\theta^2
  \epsilon}$ steps.
}

The example also shows that using the greedy algorithm, we cannot
expect to obtain a multiplicative guarantee on the {\em error}. In the
example, the optimal error is zero, but as long as the full set of
vectors is not picked, the error of the algorithm will be non-zero.

\paragraph{The construction.} 
Suppose $e_0, e_1, \dots, e_n$ are orthogonal vectors.  The vectors in
our collection are the following: $e_1$, $\theta e_0 +e_1$, and
$2\theta e_0 + e_j$, for $j \ge 2$. Thus we have $n+1$ vectors. The
target vector $u$ is $e_0$.  Clearly we can write $e_0$ as a linear
combination of the first two vectors in our collection.

Let us now see what the greedy algorithm does. In the first step, it
picks the vector that has maximum squared-inner product with $e_0$.
This will be $2\theta e_0 + e_2$ (breaking ties arbitrarily). We claim
inductively that the algorithm never picks the first two vectors of
our collection. This is clear because $e_1, e_2, \dots, e_n$ are all
orthogonal, and the first two vectors have a strictly smaller
component along $e_0$, which is what matters for the greedy choice (it
is an easy calculation to make this argument formal).

Thus after $t$ iterations, we will have picked $2\theta e_0 + e_2,
2\theta e_0 + e_3, \dots, 2\theta e_0 + e_{t+1}$.  Let us call them
$v_1, v_2, \dots v_t$ resp. Now, what is the unit vector in the span
of these vectors that has the largest squared dot-product with $e_0$?

It is a simple calculation to find the best linear combination of the
$v_i$ -- all the coefficients need to be equal. Thus the best unit
vector is a normalized version of $(1/t) (v_1 + \dots v_t)$, which is
\[ v = \frac{ 2\theta e_0 + \frac{1}{t}(e_2 + e_3 +\dots e_{t+1})}{\sqrt{\frac{1}{t} + 4\theta^2}}. \]

For this $v$, to have $\iprod{u, v}^2 \ge 1-\epsilon$, we must have 
\[\frac{4\theta^2}{\frac{1}{t} + 4\theta^2} \le 1- \epsilon,  \]
which simplifies (for $\epsilon \le 1/2$) to $\frac{1}{4 t\theta^2} \le \frac{\epsilon}{2}$, or $t > \frac{1}{2\theta^2\epsilon}$.

\subsection{Proof of Theorem~\ref{thm-lazier-than-lazy:main}} \label{app:thm-lazier-than-lazy}
The key ingredient in our argument is that in every iteration, we obtain large marginal gain in expectation. This is formally stated in the following lemma.

\begin{lemma} \label{lem:large-gain-lazier-than-lazy}
Let $S, T$ be two sets of columns from $B$, with $f_A(S) \geq f_A(T)$. Let $|S| \leq k$, and let $R$ be a size $\frac{n_B \log \frac{1}{\delta}}{k}$ subset drawn uniformly at random from the columns of $B \setminus T$.  Then the expected gain in an iteration of $\ltlgreedy$ is at least $(1 - \delta) \sigma_{\min}(S) \frac{(f_A(S) - f_A(T))^2}{4kf_A(S)}$.
\end{lemma}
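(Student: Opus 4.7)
The plan is to combine the sum-of-gains inequality that is already implicit in the proof of Lemma~\ref{lem:large-gain} with a Mirzasoleiman-style random-sampling argument. The three steps are: (i) extract a lower bound on the total marginal gain of elements in $S$, (ii) use symmetry of the random subset $R$ to show that conditional on $R$ hitting the ``good set'', a uniformly random element of the intersection has the right expected gain, and (iii) balance the hitting probability against a possibly small $|S \setminus T|$ via a concavity inequality.

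First, I would lift directly from the proof of Lemma~\ref{lem:large-gain}: the chain \eqref{eq:start}--\eqref{eq:temp8} actually establishes
\[ \sum_{v \in S} \bigl( f_A(T \cup v) - f_A(T) \bigr) \;\geq\; \sigma_{\min}(S)\, \frac{\bigl(f_A(S) - f_A(T)\bigr)^2}{4\, f_A(S)}, \]
not merely the existence of a single large-gain $v$. Writing $D_v := f_A(T \cup v) - f_A(T)$, the terms with $v \in T$ vanish, so the sum is really over $H := S \setminus T$. Setting $\bar{D} := \sigma_{\min}(S)\, \frac{(f_A(S) - f_A(T))^2}{4k\, f_A(S)}$ and $x := |H|/k \in (0,1]$, the inequality becomes $\sum_{v \in H} D_v \geq k\bar{D}$, so the average of $D_v$ over $H$ is at least $\bar{D}/x$.

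Next, I would exploit the symmetry of the uniform sampling. Let $E$ be the event $R \cap H \neq \emptyset$. Conditioned on $E$, pick a uniformly random element $u \in R \cap H$. Since the law of $R$ is invariant under any permutation of $H$, the marginal distribution of $u$ is uniform on $H$, giving $\E[D_u \mid E] \geq \bar{D}/x$. Because $\max_{v \in R} D_v \geq D_u$ deterministically,
\[ \E\Bigl[\max_{v \in R} D_v \Bigr] \;\geq\; \Pr[E]\cdot \frac{\bar{D}}{x}. \]
A direct hypergeometric bound with $|R| = (n_B/k)\log(1/\delta)$ gives $\Pr[\neg E] \leq (1 - |H|/(n_B - |T|))^{|R|} \leq e^{-|R||H|/n_B} = \delta^x$, so $\Pr[E] \geq 1 - \delta^x$ and consequently $\E[\max_v D_v] \geq \bigl((1 - \delta^x)/x\bigr)\,\bar{D}$.

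Finally, it suffices to verify $(1 - \delta^x)/x \geq 1 - \delta$ for every $x \in (0, 1]$. The function $g(x) := 1 - \delta^x$ has $g''(x) = -(\ln \delta)^2 \delta^x < 0$ for $\delta \in (0, 1)$, so $g$ is concave; combined with $g(0) = 0$ and $g(1) = 1 - \delta$, concavity forces $g(x) \geq x(1-\delta)$ on $[0,1]$, which is exactly the desired inequality. The main obstacle I anticipate is precisely this last point: when greedy has already picked some elements of $S$, $|H|$ may be much smaller than $k$, so the hitting probability $\Pr[E]$ degrades to $1 - \delta^x$; however the per-hit expected gain simultaneously inflates to $\bar{D}/x$, and the concavity inequality is exactly what shows these two effects cancel, making the final bound $(1-\delta)\bar{D}$ independent of $|H|$.
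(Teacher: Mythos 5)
Your proposal is correct and follows essentially the same route as the paper's proof: both use the sum-of-gains bound from the chain \eqref{eq:start}--\eqref{eq:temp8}, the symmetry of the uniform sample to reduce the conditional expectation of the max to the average gain over $S \setminus T$, and the bound $1 - \delta^{|S\setminus T|/k} \geq (1-\delta)\,|S\setminus T|/k$ (which the paper states directly and you justify via concavity) to cancel the $|S \setminus T|$ dependence. Your write-up simply makes the last step and the exchangeability argument slightly more explicit than the paper does.
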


\begin{proof}[Proof of Lemma~\ref{lem:large-gain-lazier-than-lazy}]
The first part of the proof is nearly identical to the proof of Lemma 2 in \cite{Mirzasoleiman}. We repeat the details here for the sake of completeness.

Intuitively, we would like the random sample $R$ to include vectors we have not seen in $S \setminus T$. In order to lower bound the probability that $R \cap (S \setminus T) \neq \emptyset$, we first upper bound the probability that $R \cap (S \setminus T) = \emptyset$.
\begin{align}
\mathbb{P}\{R \cap (S \setminus T) = \emptyset\}
&= \Big(1 - \frac{|S \setminus T|}{n_B - |T|}\Big)^{\frac{n_B \log(\frac{1}{\delta})}{k}} \label{Stochastic greedy lemma eq 1}
\\ &\leq e^{-\frac{n_B \log(\frac{1}{\delta})}{k} \frac{|S \setminus T|}{n_B - |T|}} 
\\ &\leq e^{-\frac{\log(\frac{1}{\delta})}{k} |S \setminus T|} 
\end{align}
where we have used the fact that $1 - x \leq e^{-x}$ for $x \in \mathbb{R}$. Recalling that $\frac{|S\setminus T|}{k} \in [0, 1]$, we have:
\begin{align}
\mathbb{P}\{R \cap (S \setminus T) \neq \emptyset\}
&\geq 1 - e^{-\frac{\log(\frac{1}{\delta})}{k} |S \setminus T|}
\\ &\geq (1 - e^{-\log(\frac{1}{\delta})}) \frac{|S \setminus T|}{k}
\\ &= (1 - \delta) \frac{|S \setminus T|}{k} \label{Stochastic greedy lemma eq 2}
\end{align}
The next part of the proof relies on techniques developed in the proof of Theorem 1 in \cite{Mirzasoleiman}. For notational convenience, define $\Delta(v|T) = f_A(T \cup v) - f_A(T)$ to be the marginal gain of adding $v$ to $T$. Using the above calculations, we may lower bound the expected gain $\mathbb{E}[\max_{v \in R} \Delta(v | T)]$ in an iteration of $\ltlgreedy$ as follows:
\begin{align}
& \mathbb{E} \big[\max_{v \in R} \Delta(v | T)\big]
\\ &\geq (1 - \delta) \frac{|S \setminus T|}{k} \cdot \E[\max_{v \in R} \Delta(v | T) \; \Big| \; R \cap (S \setminus T) \neq \emptyset] \label{Stochastic greedy lemma eq 3}
\\ &\geq (1 - \delta) \frac{|S \setminus T|}{k} \cdot \E[\max_{v \in R \cap (S \setminus T)} \Delta(v | T) \; \Big| \; R \cap (S \setminus T) \neq \emptyset] \label{Stochastic greedy lemma eq 4}
\\ &\geq (1 - \delta) \frac{|S \setminus T|}{k} \cdot \frac{\sum_{v \in S \setminus T} \Delta(v | T)}{|S \setminus T|} \label{Stochastic greedy lemma eq 5}
\\ &\geq (1 - \delta) \sigma_{min}(S) \frac{(f_A(S)- f_A(T))^2}{4kf_A(S)} \label{Stochastic greedy lemma eq 6}
\end{align}
Equation \eqref{Stochastic greedy lemma eq 3} is due to conditioning on the event that $R \cap (S \setminus T) \neq \emptyset$, and lower bounding the probability that this happens with Equation \eqref{Stochastic greedy lemma eq 2}. Equation \eqref{Stochastic greedy lemma eq 5} is due to the fact each element of $S$ is equally likely to be in $R$, since $R$ is chosen uniformly at random. Equation \eqref{Stochastic greedy lemma eq 6} is a direct application of equation \eqref{eq:temp8} because $\sum_{v \in S \setminus T} \Delta(v | T) = \sum_{v \in S} \Delta(v | T)$.
\end{proof}

We are now ready to prove Theorem~\ref{thm-lazier-than-lazy:main}. The proof technique is similar to that of Theorem~\ref{thm:greedy-main}.
\begin{proof}[Proof of Theorem~\ref{thm-lazier-than-lazy:main}]
For each $i \in \{0, \dots, r\}$, let $T_i$ denote the set of $i$ columns output by $\ltlgreedy(A, B, i)$. We adopt the same notation for $F$ as in the proof of Theorem \ref{thm:greedy-main}. We also use a similar construction of $\{\Delta_0, \dots, \Delta_N\}$ except that we stop when $\Delta_N \leq \frac{\eps}{1 - \delta}F$.

We first demonstrate that it takes at most $\frac{8kF}{(1 - \delta) \sigma_{min}(OPT_k) \Delta_i}$ iterations to reduce the gap from $\Delta_i$ to $\frac{\Delta_i}{2} = \Delta_{i+1}$ in expectation. To prove this, we invoke Lemma \ref{lem:large-gain-lazier-than-lazy} on each $T_i$ to see that the expected gap filled by $\frac{8kF}{(1 - \delta) \sigma_{min}(OPT_k) \Delta_i}$ iterations is lower bounded by $\frac{8kF}{(1 - \delta) \sigma_{min}(OPT_k) \Delta_i} \cdot  (1 - \delta) \frac{\sigma_{min}(OPT_k)(\frac{\Delta_i}{2})^2}{4kF} = \frac{\Delta_i}{2} = \Delta_{i+1}$. Thus the total number of iterations $r$ required to decrease the gap to at most $\Delta_N \leq \frac{\eps}{1 - \delta} F$ in expectation is:
\begin{align}
r
&\leq \sum_{i=0}^{N-1} \frac{8kF}{(1 - \delta)\sigma_{min}(OPT_k) \Delta_i}
\\ &= \frac{8kF}{(1 - \delta)\sigma_{min}(OPT_k)} \sum_{i=0}^{N-1} \frac{2^{i-N+1}}{\Delta_{N-1}} \label{Stochastic greedy theorem eq 2}
\\ &< \frac{16k}{\eps \sigma_{min}(OPT_k)} \label{Stochastic greedy theorem eq 3}
\end{align}
where equation \eqref{Stochastic greedy theorem eq 3} is because $\Delta_{N-1} > \frac{\eps}{1 - \delta} F$ and $\sum_{i=0}^{N-1} 2^{i-N+1} < 1$. Therefore, after $r \geq \frac{16k}{\eps \sigma_{min}(OPT_k)}$ iterations, we have that:
\begin{align}
f_A(OPT_k) - \mathbb{E}[f_A(T_r)] 
&\leq \frac{\eps}{1 - \delta} f_A(OPT_k) 
\\ &\leq (\eps + \delta)f_A(OPT_k)
\end{align}
because $\eps + \delta \leq 1$. Rearranging proves the theorem.
\end{proof}

\subsection{Random Projections to reduce the number of rows} \label{app:random-projections}
Suppose we have a set of vectors $A_1, A_2, \dots, A_n$ in
$\re^m$, and let $\eps, \del$ be given accuracy parameters. For an integer $1 \le k \le n$, we say that a vector $x$ is in the $k$-span of $A_1, \dots, A_n$ if we can write $x = \sum_j \alpha_j A_j$, with at most $k$ of the $\alpha_j$ non-zero. Our main result of this section is the following.

\begin{thm} \label{thm:random-projections}
Let $1\le k \le n$ be given, and set $d = O^*(\frac{k \log
(\frac{n}{\delta \eps})}{\eps^2})$, where we use $O^*(\cdot)$ to omit $\log \log$ terms.  Let $G \in \re^{d \times m}$ be a matrix with entries drawn independently from $\calN(0,1)$. Then with probability at least $1-\delta$,
for {\em all} vectors $x$ that are in the $k$-span of
$A_1, A_2, \dots, A_n$, we have
\[ (1-\eps)\norm{x}_2 \le \frac{1}{\sqrt{d}} \norm{Gx}_2 \le (1+\eps) \norm{x}_2. \]
\end{thm}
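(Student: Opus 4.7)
The statement is a union-bound over the $\binom{n}{k}$ subspaces that can arise as $k$-spans, combined with a standard subspace Johnson--Lindenstrauss embedding for each fixed subspace. Specifically, for any $k$-subset $S \subseteq [n]$ of size exactly $k$, let $V_S = \Span(\{A_j : j \in S\})$; this is a subspace of dimension at most $k$, and every $x$ in the $k$-span of $A_1,\dots,A_n$ lies in some $V_S$. Since the embedding property scales with $\norm{x}_2$, it suffices to show: with probability $\ge 1-\delta$, for all size-$k$ subsets $S$ and for all unit vectors $u \in V_S$, we have $(1-\eps) \le \frac{1}{\sqrt d}\norm{Gu}_2 \le (1+\eps)$.

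First, I would fix a single $V_S$ of dimension $k'\le k$. I would construct an $(\eps/4)$-net $\mathcal{N}_S$ on the unit sphere of $V_S$; a standard volume argument bounds $|\mathcal{N}_S| \le (12/\eps)^{k}$. For each fixed unit vector $u$, the quantity $\norm{Gu}_2^2$ is a $\chi^2_d$ random variable (times $\norm{u}_2^2 = 1$), so standard Gaussian/chi-squared tail bounds give
\[
\Pr\Big[\,\big|\tfrac{1}{d}\norm{Gu}_2^2 - 1\big| > \tfrac{\eps}{2}\,\Big] \;\le\; 2\exp(-c\eps^2 d)
\]
for some absolute constant $c>0$. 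A union bound over $\mathcal{N}_S$ shows that all net points are preserved simultaneously with failure probability at most $2(12/\eps)^{k}\exp(-c\eps^2 d)$. Then I would extend this from the net to the whole unit sphere of $V_S$ via the standard successive-approximation trick (write any unit $u \in V_S$ as $u = u_0 + r$ with $u_0 \in \mathcal{N}_S$ and $\norm{r}_2 \le \eps/4$, iterate, and sum a geometric series) to conclude that every unit $u \in V_S$ satisfies $(1-\eps)\le \frac{1}{\sqrt d}\norm{Gu}_2 \le (1+\eps)$.

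Next I would union-bound over the $\binom{n}{k} \le n^k$ possible subspaces $V_S$. The total failure probability is at most
\[
n^k \cdot 2\Big(\tfrac{12}{\eps}\Big)^{k}\exp(-c\eps^2 d).
\]
Setting this at most $\delta$ and solving for $d$ gives the bound
\[
d \;=\; O\!\left(\frac{k\log(n/\eps) + \log(1/\delta)}{\eps^2}\right) \;=\; O^*\!\left(\frac{k\log(n/(\delta\eps))}{\eps^2}\right),
\]
as claimed.

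The argument is almost entirely routine; the only subtlety is making sure the net bound $(12/\eps)^k$ and the $n^k$ subspace-count combine properly with the Gaussian tail exponent, which is where the $\log(n/(\delta \eps))$ dependence arises. The ``main obstacle'' is therefore not a deep one but rather bookkeeping: keeping the constants in the chi-squared concentration, the net covering number, and the net-to-sphere extension consistent so that $\eps$ in the final statement matches the $\eps/2$ used at the net level and the $\eps/4$ used in the covering. Alternatively, one could invoke a black-box oblivious subspace embedding result for Gaussian matrices (e.g., Lemma~10 of~\cite{Sarlos} applied to each $V_S$, with failure probability $\delta/\binom{n}{k}$), and the union bound again delivers the same $d$.
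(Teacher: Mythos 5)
Your proof is correct and follows the same overall skeleton as the paper's: reduce by homogeneity to unit vectors, union-bound over the $\binom{n}{k}$ candidate subspaces, place a net on the unit sphere of each, apply Gaussian/JL concentration to each net point, and then pass from the net to the whole sphere. The one genuine difference is in that last step. You use a constant-resolution $(\eps/4)$-net of size $(12/\eps)^k$ per subspace together with the successive-approximation (geometric-series) argument, which is entirely self-contained. The paper instead uses a much finer net, with mesh $\gamma = \frac{\eps}{4\sqrt{d}\log(4/\delta)}$, and controls the off-net error by the crude bound $\norm{G(x-u)}_2 \le \gamma\norm{G}_2$, which forces it to invoke a separate concentration result for the largest singular value of a Gaussian matrix (cited from Rudelson--Vershynin) and makes the net size depend on $d$ itself; solving the resulting implicit inequality for $d$ is exactly what produces the $\log\log$ terms hidden in the paper's $O^*(\cdot)$. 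Your route avoids both the extra spectral-norm ingredient and the $\log\log$ correction, yielding the cleaner explicit bound $d = O\bigl(\frac{k\log(n/\eps)+\log(1/\delta)}{\eps^2}\bigr)$; it is, if anything, the more standard modern presentation of a subspace embedding. (As a minor point in your favor, you also count $\binom{n}{k}$ subspaces, whereas the paper's text slips between $m$ and $n$ here.)
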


The proof is a standard $\eps$-net argument that is similar to the proof of Lemma 10 in \cite{Sarlos}. Before giving the proof, we first state the celebrated lemma of Johnson and Lindenstrauss.
\begin{thm}\label{thm:jl-classic}\cite{Johnson}
Let $x \in \re^m$, and suppose $G \in \re^{d\times m}$ be a matrix with entries drawn independently from $\calN(0,1)$.  Then for any $\eps > 0$, we have \[ \Pr\left[ (1-\eps)\norm{x}_2 \le \frac{1}{\sqrt{d}} \norm{Gx}_2 \le (1+\eps) \norm{x}_2 \right] \ge 1- e^{-\eps^2d/4}. \]
\end{thm}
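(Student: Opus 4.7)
The plan is to bypass the $\varepsilon$-net/union-bound over points used previously and instead establish uniform norm preservation on each $k$-dimensional subspace by \emph{sharp singular value concentration for Gaussian matrices}, then union bound over the (at most $\binom{n}{k}$) subspaces that cover the $k$-span.

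\textbf{Reduction to subspace embedding.} Every $x$ in the $k$-span lies in some $V_J := \Span(\{A_j : j \in J\})$ with $|J|=k$; there are at most $\binom{n}{k}$ such $V_J$ and each has dimension $k_J\le k$. Fix one and pick an orthonormal basis $U_J \in \re^{m\times k_J}$. Then every $x\in V_J$ has the form $x = U_J y$ with $\|x\|_2=\|y\|_2$ and $\|Gx\|_2=\|(GU_J)y\|_2$, so the map $x\mapsto Gx/\sqrt d$ distorts norms by at most $(1\pm\eps)$ on all of $V_J$ \emph{iff} every singular value of $GU_J/\sqrt d$ lies in $[1-\eps,1+\eps]$. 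By rotational invariance of the standard Gaussian, $GU_J$ has the same law as a $d\times k_J$ matrix $M$ with iid $\calN(0,1)$ entries, so the task reduces to controlling the extreme singular values of such an $M$.

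\textbf{Gaussian singular value concentration.} For $M$ above, the Davidson--Szarek bound, combined with Gaussian Lipschitz concentration, gives for every $t>0$
\begin{align*}
\mathbb{P}\!\left[\sigma_{\max}(M)\ge \sqrt d+\sqrt{k_J}+t\right] &\le e^{-t^2/2}, \\
\mathbb{P}\!\left[\sigma_{\min}(M)\le \sqrt d-\sqrt{k_J}-t\right] &\le e^{-t^2/2}.
\end{align*}
Dividing by $\sqrt d$, both extreme singular values of $GU_J/\sqrt d$ lie in $1\pm\bigl(\sqrt{k/d}+t/\sqrt d\bigr)$ with failure probability at most $2e^{-t^2/2}$.

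\textbf{Union bound and parameters.} Pick $t=\Theta\!\bigl(\sqrt{\log(\binom{n}{k}/\delta)}\bigr)$, so $t^2=O(k\log(n/k)+\log(1/\delta))$, and set $d = C(\sqrt k+t)^2/\eps^2 = O\bigl(k\log(n/\delta)/\eps^2\bigr)$, which matches (indeed slightly improves) the claimed $O^*(k\log(n/(\delta\eps))/\eps^2)$. Then $(\sqrt k+t)/\sqrt d \le \eps$, and a union bound over the $\binom{n}{k}$ subspaces $V_J$ shows that, with probability at least $1-\delta$, the map $x\mapsto Gx/\sqrt d$ preserves norms up to $(1\pm\eps)$ simultaneously for every $x$ in the $k$-span.

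\textbf{Main obstacle.} The crux is the sharp singular value concentration in the middle step: quoting Davidson--Szarek gives a one-line proof but invokes a nontrivial random matrix result. If one wishes to keep the argument self-contained, it decomposes cleanly into (i) Gordon's comparison theorem, which yields $\E[\sigma_{\max}(M)]\le\sqrt d+\sqrt{k_J}$ and $\E[\sigma_{\min}(M)]\ge\sqrt d-\sqrt{k_J}$, and (ii) the fact that $\sigma_{\max}$ and $\sigma_{\min}$ are $1$-Lipschitz in the Frobenius norm of $M$, so standard Gaussian concentration converts those mean bounds into the claimed $e^{-t^2/2}$ tails. This is precisely the route that removes the $(3/\eps)^k$ overhead of the net approach, explaining why no $\log(1/\eps)$ factor is needed in $d$.
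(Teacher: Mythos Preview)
Your proposal does not address the stated theorem. Theorem~\ref{thm:jl-classic} is the classical Johnson--Lindenstrauss lemma for a \emph{single fixed} vector $x$: it asserts a pointwise concentration bound $\Pr[\,(1-\eps)\norm{x}_2\le d^{-1/2}\norm{Gx}_2\le(1+\eps)\norm{x}_2\,]\ge 1-e^{-\eps^2 d/4}$. The paper does not prove this statement at all; it is quoted from \cite{Johnson} and used as a black box inside the proof of Theorem~\ref{thm:random-projections}. A proof of Theorem~\ref{thm:jl-classic} is a one-variable chi-squared tail computation: for unit $x$, the entries of $Gx$ are iid $\calN(0,1)$, so $\norm{Gx}_2^2$ is $\chi^2_d$, and the bound follows from standard Laplace-transform tail estimates for $\chi^2_d$.

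What you have written is instead an (alternative) argument for Theorem~\ref{thm:random-projections}, the uniform-over-$k$-span statement. Your subspace-by-subspace approach via Davidson--Szarek singular value concentration is correct for that theorem and is a genuine alternative to the paper's route: the paper applies Theorem~\ref{thm:jl-classic} pointwise to a $\gamma$-net of size roughly $(4/\gamma)^k$ inside each of the $\binom{n}{k}$ subspaces and union bounds, which is why a $\log(1/\eps)$ creeps into their choice of $d$. Your argument replaces the net by exact singular value control of $GU_J$, so the $\log(1/\eps)$ factor disappears and $d=O(k\log(n/\delta)/\eps^2)$ suffices. That is a clean improvement for Theorem~\ref{thm:random-projections}, but it is simply not a proof of the statement you were asked to prove.
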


Now we prove Theorem ~\ref{thm:random-projections}.
\begin{proof}[Proof of Theorem ~\ref{thm:random-projections}]
The proof is a simple `net' argument for unit vectors in the $k$-span
of $A_1, \dots, A_n$.  The details are as follows.

First note that since the statement is scaling invariant, it suffices
to prove it for {\em unit} vectors $x$ in the $k$-span. Next, note
that it suffices to prove it for vectors in a $\gamma$-net for the
unit vectors in the $k$-span, for a small enough $\gamma$. To recall,
a $\gamma$-net for a set of vectors $S$ is a finite set subset
$\calN_\gamma$ with the property that for all $x \in S$, there exists a
$u \in \calN_\gamma$ such that $\norm{x - u}_2 \le \gamma$.

Suppose we fix some $\gamma$-net for the set of unit vectors in the
$k$-span of $A_1, \dots, A_n$, and suppose we show that for all $u \in
\calN_\gamma$, we have
\begin{equation}
  (1-\eps/2)\norm{u}_2 \le \frac{1}{\sqrt{d}} \norm{Gu}_2 \le (1+\eps/2) \norm{u}_2.\label{eq:eps-net-needed}
\end{equation}
Now consider any $x$ in the $k$-span. By definition, we can write $x =
u +w$, where $u \in \calN_\gamma$ and $\norm{w} <\gamma$.  Thus we
have
\begin{equation}
  \norm{Gu}_2 - \gamma \norm{G}_{2} \le \norm{Gx}_2 \le \norm{Gu}_2 + \gamma \norm{G}_2, \label{eq:eps-net-bound}
\end{equation}
where $\norm{G}_2$ is the spectral norm of $G$.  From now, let us set
$\gamma = \frac{\eps}{4\sqrt{d} \log(4/\delta)}$.  Now, whenever
$\norm{G}_2 < 2\sqrt{d}\log (4/\delta)$,
equation~\eqref{eq:eps-net-bound} implies
\[ \norm{Gu}_2 - \frac{\eps}{2} \le \norm{Gx}_2 \le \norm{Gu}_2 + \frac{\eps}{2}.\]

The proof follows from showing the following two statements: (a) there
exists a net $\calN_\gamma$ (for the above choice of $\gamma$) such
that Eq.~\eqref{eq:eps-net-needed} holds for all $u \in \calN_\gamma$
with probability $\ge 1- \delta/2$, and (b) we have $\norm{G}_2 <
2\sqrt{d}\log(4/\delta)$ w.p. at least $1-\delta/2$.

Once we have (a) and (b), the discussion above completes the proof. We
also note that (b) follows from the concentration inequalities on the
largest singular value of random matrices \cite{Rudelson}. Thus it only remains to prove (a).

For this, we use the well known result that every
$k$-dimensional space, the set of unit vectors in the space has a
$\gamma$-net (in $\ell_2$ norm, as above) of size at most
$(4/\gamma)^k$ \cite{Vershynin}. In our setting, there are $\binom{m}{k}$
such spaces we need to consider (i.e., the span of every possible
$k$-subset of $A_1, \dots, A_m$).  Thus there exists a $\gamma$-net
for the unit vectors in the $k$-span, which has a size at most
\[ \binom{m}{k} \cdot \left( \frac{4}{\gamma} \right)^k < \left( \frac{4m}{\gamma} \right)^k, \]
where we used the crude bound $\binom{m}{k} < m^k$.

Now Theorem~\ref{thm:jl-classic} implies that for any (given) $u \in
\calN_\gamma$ (replacing $\eps$ by $\eps/2$ and noting $\norm{u}_2 =
1$), that
\[ \Pr \left[  1+ \frac{\eps}{2} \le \frac{1}{\sqrt{d}} \norm{Gu}_2 \le 1+\frac{\eps}{2} \right] \ge 1- e^{-\eps^2 d/16}.\]

Thus by a union bound, the above holds for all $u \in \calN_\gamma$
with probability at least $1- |\calN_\gamma|e^{-\eps^2 d/16}$.
For our choice of $d$, it is easy to verify that this quantity is at
least $1-\delta/2$.  This completes the proof of the theorem.
\end{proof}

\subsection{Efficient Calculation of Marginal Gain}\label{sec:app:marginal}
A naive implementation of calculating the marginal gain $f_A(S \cup v) - f(S)$ takes $O(mk^2 + kmn_A)$ floating-point operations (FLOPs) where $|S| = O(k)$. The first term is from performing the Grahm-Schmidt orthonormalization of $S$, and the second term is from calculating the projection of each of $A$'s columns onto span$(S)$.

However, it is possible to significantly reduce the marginal gain calculations to $O(mn_A)$ FLOPs in $\greedy$ by introducing $k$ updates, each of which takes $O(mn_A + mn_B)$ FLOPs. This idea was originally proven correct by \cite{Farahat2}, but we discuss it here for completeness.

The simple yet critical observation is that $\greedy$ permanently keeps a column $v$ once it selects it. So when we select $v$, we immediately update all columns of $A$ and $B$ by removing their projections onto $v$. This allows us to calculate marginal gains in future iterations without having to consider $v$ again.

\subsection{Proof of Theorem~\ref{thm:core-set-2}} \label{app:core-set-2}
\begin{proof}
  Let $C^t$ be the union of first $t$ solutions: $\cup_{j=1}^t S^{j}$. The main observation is that to compute $f^{t+1}(V)$, we can think of first subtracting off the components of the columns of $A$ along $C^t$ to obtain $A'$, and simply computing $f_{A'}(V)$. Now, a calculation identical to the one in Eq.~\eqref{eq:dotprod-lb2} followed by the ones in Eq.~\eqref{eq:start}-\eqref{eq:temp8} (to go from one vector to the matrix) implies that $f_{A'}(\opt) \ge \big( \sqrt{f_{A}(\opt)} - \sqrt{f_{A}(C^t)}\big)^2$.  Now we can complete the proof as (Theorem~\ref{thm:greedy-main}).
\end{proof}

\subsection{Necessity of Random Partitioning} \label{app:rand-part}
We will now make the intuition formal. We consider the $\gcss(A, B, k)$ problem, in which we wish to cover the columns of $B$ using columns of $A$. Our lower bound holds not just for the greedy algorithm, but for any {\em local} algorithm we use -- i.e., any algorithm that is not aware of the entire matrix $B$ and only works with the set of columns it is given and outputs a poly$(k)$ sized subset of them.

\begin{thm} \label{thm:rand-part}
For any square integer $k \geq 1$ and constants $\beta, c >0$, there exist two matrices $A$ and $B$, and a partitioning of $(B_1, B_2, \dots, B_\ell)$ with the following property. Consider any local, possibly randomized, algorithm that takes input $B_i$ and outputs $O(k^{\beta})$ columns $S_i$.  Now pick $ck$ elements $S^*$ from $\cup_i S_i$ to maximize $f_A(S^*)$. Then, we have the expected value of $f_A (S^*)$ (over the randomization of the algorithm) is at most $O\left( \frac{c\beta \log k}{\sqrt{k}} \right) f_A(\opt_k)$.
\end{thm}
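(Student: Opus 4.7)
The plan is to exhibit a deterministic hard instance $(A, B, \{B_i\})$ adapted from the composable core-sets lower bound of \cite{VahabPODS2014}, and to argue that no local algorithm outputting only $O(k^\beta)$ columns per machine can recover more than an $O(c\beta\log k/\sqrt k)$ fraction of $f_A(\opt_k)$. Take $A \in \re^{k \times k}$ with orthonormal columns so that $f_A(\opt_k) = k$, and set the number of machines to $\ell = \Theta(\sqrt k)$. Partition the target directions into $\sqrt k$ hidden blocks, one per machine: the columns of $B$ on machine $i$ will consist of an ``optimal block'' of $\sqrt k$ vectors spanning a $\sqrt k$-dimensional subspace $V_i\subset\re^k$, together with $N \gg k^\beta$ carefully chosen decoy unit vectors. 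The global optimum selects the optimal blocks from all $\sqrt k$ machines and achieves $f_A(\opt_k)=k$, but a local algorithm, unable to distinguish the optimal block from the decoys, should waste most of its budget on useless columns.

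To formalize the indistinguishability, I would invoke Yao's minimax principle, reducing the randomized lower bound to a distributional lower bound over a family of instances obtained by applying independent random orthogonal rotations to each machine's decoys. The decoys are built to be invariant in distribution under this rotation --- for instance, by sampling many i.i.d.\ uniformly random orthonormal bases of $V_i$ and including all their vectors --- so that any deterministic local algorithm's output, averaged over the rotation, is a spherically random multiset in $V_i$. The main obstacle lies in calibrating this construction so that (i) $\opt_k$ continues to achieve value $\Omega(k)$ (which requires the ``signaled'' blocks to be planted consistently and to remain identifiable from the global perspective) and (ii) any local $k^\beta$-pick produces a random span whose expected contribution to $f_A$ is only $O(\beta\log k)$ per machine.

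Once the symmetry is in place, I would compute $\E[f_A(S^*)]$ by linearity of expectation. For each machine $i$, the $O(k^\beta)$ output columns contribute, in expectation, an amount $O(\beta\log k)$ to $\sum_j \|\Pi_{S^*}e_j\|_2^2$, via a matrix-Bernstein or Chernoff-type concentration bound applied to the random-subspace projection inside $V_i$. Aggregating over $\sqrt k$ machines, absorbing the additional factor $c$ from the final choice of $ck$ columns out of the union, and dividing by $f_A(\opt_k)=k$ yields the stated $O(c\beta\log k/\sqrt k)$ ratio. I expect the $\log k$ factor to emerge naturally from the union bound across the $\sqrt k$ machines combined with the concentration tail, rather than being essential to the geometric construction itself.
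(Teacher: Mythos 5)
There is a genuine gap, and it is in the heart of the construction: the hiding mechanism you propose does not hide anything. If the decoys on machine $i$ are ``many i.i.d.\ uniformly random orthonormal bases of $V_i$,'' then every one of those bases spans $V_i$ exactly, so a local algorithm that outputs any single decoy basis ($\sqrt{k}$ columns, well within the $O(k^\beta)$ budget for $\beta \ge 1/2$) already captures the entire valuable subspace $V_i$. Summing over machines, the adversary recovers all of $f_A(\opt_k)=k$, and no lower bound follows. The rotation-invariance/Yao argument only buys you something if the valuable directions are a \emph{vanishing fraction} of what each machine can possibly output — i.e., the span of each machine's input must be much larger than the hidden block, and any $O(k^\beta)$ columns must provably capture only an $O(\beta\log k)$ sliver of the hidden block. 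Your construction as stated does not arrange this, and arranging it is precisely the hard part; the claimed per-machine bound of $O(\beta\log k)$ via matrix-Bernstein is asserted but has no instance to apply to. A secondary issue: with $\ell=\Theta(\sqrt{k})$ machines each contributing a $\sqrt{k}$-dimensional optimal block, $\opt_k$ must consist of actual columns of $B$ spanning those blocks, and those columns must themselves be indistinguishable from decoys — another constraint your instance does not yet satisfy.

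For comparison, the paper's instance makes the ``vanishing fraction'' explicit and purely combinatorial, which is what lets the symmetry argument close. It sets $k=a^2$, uses $\ell=a^2=k$ machines (not $\sqrt{k}$), takes the target $A$ to be a single all-ones vector, and gives machine $i$ the indicator vectors of all $a$-subsets of $X\cup Y_i$, where $|X|=a^2$ is shared across machines and $|Y_i|=a$ is machine $i$'s private valuable block. Since the algorithm cannot distinguish the $a$ coordinates of $Y_i$ from the $a^2$ coordinates of $X$, each output column overlaps $Y_i$ in $O(\beta\log a)$ coordinates (a hypergeometric maximum over $k^\beta$ sets — this is where the $\log k$ comes from, not from a union bound over machines). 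Then $ck$ selected columns cover $O(c\beta k\log k)$ of the $a^3=k\sqrt{k}$ valuable coordinates, while $\opt_k$ (the $k$ indicators $1_{Y_i}$) covers all of them, giving the $O(c\beta\log k/\sqrt{k})$ ratio by direct counting with no concentration machinery. If you try to repair your subspace version, you will essentially be forced to reintroduce this ``small hidden block inside a large symmetric universe'' structure.
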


\begin{proof}
Let $k = a^2$, for some integer $a$.  We consider matrices with $a^2 + a^3$ rows.  Our target matrix $A$ will be a single vector containing all $1$'s.  The coordinates (rows) are divided into sets as follows: $X = \{1, 2, \cdots, a^2\}$, and for $1 \leq i \leq a^2$, $Y_i$ is the set $\{a^2 + (i-1) \cdot a + 1, i \cdot a^2 + (i-1) \cdot a + 2, \cdots, a^2 + i \cdot a\}$.  Thus we have $a^2 + 1$ blocks, $X$ of size $a^2$ and the rest of size $a$.

Now, let us describe the matrix $B_i$ that is sent to machine $i$.  It consists of all possible $a$-sized subsets of $X \cup Y_i$.\footnote{As described, it has an exponential in $a$ number of columns, but there are ways to deal with this.}  Thus we have $\ell = a^2$ machines, each of which gets $B_i$ as above.

Let us consider what a local algorithm would output given $B_i$.  Since the instance is extremely symmetric, it will simply pick $O(k^{\beta})$ sets, such that all the elements of $X \cup Y_i$ are {\em covered}, i.e., the vector $A$ restricted to these coordinates is spanned by the vectors picked. But the key is that the algorithm cannot distinguish $X$ from $Y_i$! Thus we have that any set in the cover has at most $O(\beta \log a)$ overlap with the elements of $Y_i$.\footnote{To formalize this, we need to use Yao's minmax lemma and consider the uniform distribution.}

Now, we have sets $S_i$ that all of which have $O(\beta \log a)$ overlap with the corresponding $Y_i$.  It is now easy to see that if we select at most $ck$ sets from $\cup_i S_i$, we can cover at most $c a^2 \log a$ of the coordinates of the $a^3$ coordinates in $\cup_i Y_i$. The optimal way to span $A$ is to pick precisely the indicator vectors for $Y_i$, which will cover a $(1-(1/a))$ fraction of the mass of $A$.  Noting that $k = a^2$, we have the desired result.
\end{proof}

\end{document}